\pdfoutput=1  
\documentclass[11pt]{article}

\usepackage[utf8]{inputenc}
\usepackage[T1]{fontenc}
\usepackage{amsmath,amssymb,amsthm,mathtools}
\usepackage[margin=0.9in]{geometry}
\usepackage[numbers,sort&compress]{natbib}
\usepackage{microtype}
\usepackage{enumitem}
\usepackage{booktabs}
\usepackage{graphicx}
\usepackage[font=small,skip=5pt]{caption}
\usepackage{hyperref}

\usepackage{titlesec}
\titlespacing*{\section}{0pt}{2.2ex plus 1ex minus .2ex}{1.2ex plus .2ex}
\titlespacing*{\subsection}{0pt}{1.8ex plus 1ex minus .2ex}{0.8ex plus .2ex}
\hypersetup{colorlinks=true,linkcolor=black,citecolor=black,urlcolor=black,
  pdftitle={Anonymous sharing is pairwise phase-blind: neutrality at order two
            and third-order structure in a fleet of checkpointing jobs},
  pdfauthor={Brieuc Le Roux Tardif},
  pdfsubject={Pulse-coupled oscillators, shared-resource contention,
              checkpoint storms},
  pdfkeywords={synchronisation, integrate-and-fire, processor sharing,
               checkpointing, datacenter power}}

\newcommand{\dd}{\mathrm{d}}

\theoremstyle{plain}
\newtheorem{theorem}{Theorem}
\newtheorem{proposition}[theorem]{Proposition}
\newtheorem{corollary}[theorem]{Corollary}
\newtheorem{lemma}[theorem]{Lemma}
\theoremstyle{definition}
\newtheorem{assumption}{Assumption}
\newtheorem{definition}[theorem]{Definition}
\newtheorem{remark}[theorem]{Remark}
\newtheorem{conjecture}[theorem]{Conjecture}

\title{\bfseries Anonymous sharing is pairwise phase-blind\\[2pt]
\large\mdseries Neutrality at order two and third-order structure in a fleet of
checkpointing jobs}

\author{Brieuc Le Roux Tardif\\
\small IMT Nord Europe\\
\small \texttt{brieuclerouxtardif@gmail.com}}

\date{\today}

\begin{document}
\maketitle

\begin{abstract}
Independent training jobs sharing a storage system write their checkpoints
through the same finite bandwidth, and the resulting bursts of correlated I/O,
together with the facility-level power transients that accompany them, are
commonly described as a self-reinforcing ``checkpoint storm''. We formalise the
self-reinforcement as phase locking in a population of integrate-and-fire
oscillators coupled through a shared resource, and show that within that model
it fails for a reason that has nothing to do with checkpointing. Call a resource
\emph{anonymous} if the rate it delivers to an active user depends on how many
users are active and not on which. For \emph{identical} jobs whose write is
shorter than their compute interval, an anonymous resource produces no pairwise
coupling at all: the two-job return map of the phase gap is the identity, under
storage contention, under a shared power cap and under both at once, so the
two-body interaction on which the Kuramoto and Mirollo--Strogatz frameworks are
built is not weak here but absent. Anonymity also freezes the firing order, for
any fleet size and any cap, so no trajectory reaches the synchronous state from
outside it. What survives is a third-order effect: where all $N$ write windows
overlap and the cap does not bind, the map is diagonal in the intervals between
consecutive write starts, $a_j \mapsto ((N-j)/j)\,a_j$, with reciprocal spectrum
and unit determinant, which makes synchrony a fixed point with
$\lceil N/2 \rceil - 1$ expanding directions rather than an attractor. That
determinant follows from anonymity and not from fairness: for any anonymous
throughput $f$ with $f(n) \le n$ the spectrum becomes $(N-j)f(j)/(j\,f(N-j))$,
whose product is still one. Numerically, a fleet launched at random neither
locks nor clusters, and the only memory of its launch our observables resolve
is that frozen order: after hundreds of cycles its configuration sits as far
from its own launch as from an independent one relabelled into the same order
sector, and no decay of its smallest gap is resolved. Absence of locking is not
absence of bursts, the upper tail of the number of concurrent writers staying
above its independent-phase value in every cell measured. Nothing erodes a
stagger in the deterministic
uncapped model; per-cycle jitter $\sigma$ does, a margin $m$ surviving
$0.11$ to $0.24\,(m/\sigma)^2$ cycles across the fleets tested. Heterogeneous
jobs behind a \emph{binding} cap do acquire a genuine pairwise coupling, which
is where the statement stops generalising.
\end{abstract}

\section{Introduction}
\label{sec:intro}

Large training runs checkpoint periodically, and a checkpoint is a burst: tens
to hundreds of gigabytes leave the accelerators for a parallel file system
while the GPUs sit idle. Two distinct things are called a \emph{checkpoint
storm}, and only one of them is our subject. Within a single distributed job,
thousands of ranks serialise their shards at the same instant by construction,
and the resulting burst is a design problem in the checkpointing stack itself
\citep{pytorch_dcp}. Across \emph{independent} jobs sharing one fabric, the
bursts contend without any mechanism aligning them, and the operational
question is whether they nonetheless drift into alignment. This paper is about
the second question only; nothing here bears on the first, which is a
sufficient cause of correlated I/O on its own. The electrical stakes are set by
the first: instantaneous fleet-level swings of tens of megawatts are reported
for a cluster of order $10^4$ accelerators \citep{llama3,semianalysis2025}, and
the sub-second timescale on which they matter for facility stability is the
subject of a growing power-engineering literature \citep{opalrt2026}.

The implicit model behind mitigation practice is that collisions between jobs
are self-reinforcing: jobs that once collide keep colliding, so schedules must
be deliberately staggered. Recent systems work treats the timing of a job's
synchronisation points as a schedulable quantity, choosing when to run DiLoCo
outer merges against measured fleet pressure \citep{diloco2026}; complementary
work smooths the power transients of a single job \citep{easyrider2026}.
Neither studies the free dynamics of the collision itself.

The question this raises is dynamical: \emph{is the synchronised state an
attractor?} If it is, staggering is futile, because contention drags jobs back
into phase. If it is repelling, contention is not the cause and a stagger has
nothing to undo, which is not the same as a fleet that can be left alone: three
properties have to be kept apart, asymptotic phase locking, transient
clustering, and the extreme upper tail of the number of concurrent writers. We
settle the first within the model, measure the second, and find the third
\emph{above} its independent-phase value even where the first two are absent
(\S\ref{sec:numerics}); \emph{storm} below always names the first, which is the
only one of the three the dynamical question is about. A companion paper \citep{lerouxtardif2026} asks the same
question for a different channel, a shared power envelope with a \emph{delayed}
throttle, and finds a coupling whose sign is set by the control lag: repulsive
to leading order, attractive only once the lag exceeds half a cycle. The present
paper removes the lag and the memory, keeps the sharing, and finds that the
pairwise term does not merely change sign, it vanishes. The two are reconciled
in \S\ref{sec:open}: the coupling in \citep{lerouxtardif2026} is a property of
the controller, not of the sharing.

The natural formalism is that of pulse-coupled oscillators: each job is an
integrate-and-fire unit whose phase is its progress through the
inter-checkpoint interval and whose discharge is the checkpoint write. For
excitatory coupling with a concave rise, Mirollo and Strogatz proved that
almost every initial condition synchronises \citep{mirollostrogatz1990}, and
inhibitory all-to-all coupling has since been shown to produce globally
attracting synchrony as well \citep{canavier2017}; applications have
concentrated on biological populations and on clock synchronisation in wireless
sensor networks \citep{wernerallen2005}. The datacenter version differs in the
coupling channel and, as we show, in the order of the interaction.

\paragraph{Contributions.} Graded by the strength of the supporting argument.
\begin{enumerate}[label=(\roman*),leftmargin=*,itemsep=1pt]
\item \textbf{(Theorem, proved)} For two \emph{identical} jobs with $d < T$,
the return map of the phase gap is the identity: under storage contention,
under a shared power cap, and under both at once, for any cap severity. The
pairwise coupling function is identically zero, not merely weak
(\S\ref{sec:pair}).
\item \textbf{(Theorem, proved)} For $N$ identical jobs whose write windows all
overlap, with the cap not binding, the return map is diagonal in interval
coordinates with eigenvalues $(N-j)/j$: reciprocal spectrum, unit determinant
on that branch, and synchrony a fixed point with $\lceil N/2 \rceil - 1$
expanding directions rather than an attractor. The determinant is a consequence
of anonymity and not of fairness, since an arbitrary anonymous throughput $f$
multiplies each eigenvalue by $f(j)/f(N-j)$ and leaves the product at one
(\S\ref{sec:nbody}).
\item \textbf{(Propositions, proved)} Two invariant structures
(\S\ref{sec:invariant}). The cyclic firing order is a constant of the motion,
for any fleet size, any cap and any anonymous rule, so distinct phases stay
distinct and synchrony is reached by no trajectory that does not start there.
And collision-free configurations, which exist exactly when $L \le N/(N-1)$,
are invariant and cannot be entered from outside, the flow being uniquely
reversible where the write windows are disjoint.
\item \textbf{(Numerical)} Off that set, nearby trajectories separate
exponentially in most launches, yet a single trajectory does not cluster: no
decay of the smallest gap is resolved over $800$ cycles, the gap of the pair
tightest at launch \emph{opens} by a factor $6$ to $33$, and Daido moments up to
$m = 4$ show no coherence that a change of null does not remove. Nor does it
keep of its launch anything these observables resolve beyond the invariant
order: its distance to the launch, $7$--$14\%$ of a cycle, is the distance to an
independent launch relabelled into the same order sector, cell by cell and
within a standard error, and the two distributions agree to a $1$-Wasserstein
distance below $5\%$ of their own spread, while the $22$--$26\%$ separating two
unrelabelled configurations measures a permutation the dynamics cannot perform.
The upper tail of the number of concurrent writers is nonetheless heavier than
that of independent phases in every cell (\S\ref{sec:numerics}).
\item \textbf{(Numerical, operational)} With per-cycle jitter $\sigma$ on the
compute time, each gap of a staggered fleet is a driftless random walk, and a
margin $m$ survives $0.11$ to $0.24\,(m/\sigma)^2$ cycles depending on the
fleet, verified over two decades (\S\ref{sec:practice}).
\item \textbf{(Numerical, negative)} Heterogeneous jobs behind a binding cap
\emph{do} couple pairwise, with a phase response supported on gaps below the
detuning, although both resources remain anonymous. This bounds the reach of
(i) (\S\ref{sec:limits}).
\end{enumerate}

\paragraph{Scope.} The results are statements about a model, and the model is
deliberately minimal (\S\ref{sec:model}). We prove what follows from three
assumptions, we verify each proof against an exact event-driven integrator, and
we state explicitly which assumptions, when relaxed, restore a pairwise
coupling, in one case by exhibiting it (\S\ref{sec:limits}). We do not claim
that real fleets never synchronise; we claim that in this model symmetric
contention between identical jobs cannot be the cause.

\section{Model}
\label{sec:model}

\begin{definition}[Fleet]
A fleet is a set of $N$ jobs. Job $i$ alternates two phases. In the
\emph{compute} phase it must accumulate an amount $T_i > 0$ of work at the
instantaneous rate $v(t)$. In the \emph{write} phase it must drain a checkpoint
volume $V_i > 0$ at the instantaneous rate $r_i(t)$. Time is normalised so that
a job computing alone advances at unit rate, and bandwidth is normalised so
that a job writing alone drains at unit rate; hence $d_i := V_i$ is the
\emph{solo write duration} of job $i$. Job $i$ \emph{fires} at the instant
$s_i$ at which it enters its write phase, and the observable is the vector of
firing times modulo the cycle. Write $n_w(t)$ for the number of jobs writing at
time $t$ and $n_c(t)$ for the number computing.
\end{definition}

\begin{assumption}[Blocking checkpoint]
\label{as:blocking}
A writing job does not compute: the phases are exclusive.
\end{assumption}

\noindent Justification: synchronous checkpointing, in which the training loop
halts while state is serialised and flushed, is the baseline that asynchronous
mechanisms are designed to improve on \citep{pytorch_dcp}, and it remains in use
wherever that path is unavailable or not enabled. We do not claim it is the
majority configuration in production, having no measurement to support such a
claim. This is the assumption that matters most (\S\ref{sec:limits}).

\begin{assumption}[Anonymous equal sharing]
\label{as:sharing}
Each writer receives bandwidth $f(n_w(t))/n_w(t)$, with $f$ positive,
$f(1) = 1$ and $f(n) \le n$; each computing job advances at
$v(t) = \min\{1, C/n_c(t)\}$, where $C \in (0,\infty]$ is a power cap expressed
as the number of jobs the cap sustains at full speed. Both rates depend on the
number of active users only, and are identical across active users. The
work-conserving case is $f \equiv 1$.
\end{assumption}

\noindent Justification: processor sharing is the standard idealisation of a
parallel file system serving concurrent streams without priorities, and of a
power budget divided uniformly across a homogeneous fleet. The condition
$f(n) \le n$ says only that concurrency makes no single stream faster than it
would be alone, and is used solely to order events in Theorem~\ref{th:nbody}.

\begin{assumption}[Deterministic per-cycle work]
\label{as:determ}
$T_i$ and $V_i$ are constants of job $i$. Heterogeneity across jobs is
permitted; stochasticity within a job is not, except where \S\ref{sec:practice}
introduces it explicitly.
\end{assumption}

\noindent Justification: the compute time between checkpoints is set by a fixed
number of steps of a fixed model, and is repeatable to a few percent absent
failures.

\begin{definition}[Anonymous resource, phase-blind coupling]
\label{def:anon}
A shared resource is \emph{anonymous} if the rate it delivers to an active user
depends only on how many users are active, and not on which. Let $k$ units be
taken in isolation and let the section be the set of states at which every unit
is computing and the numbers of writes they have completed differ by a
prescribed offset, zero being the case of units launched together. There every
unit has the same mode and no residual volume, so the hybrid state is the vector
of residual works, which the flow translates uniformly for as long as all units
compute; the state modulo that translation is the vector of $k-1$ firing
differences, and the return map $F$ of those differences is well defined on it.
That the section is nonempty and recurrent with a constant offset is a property
of the system and not of the definition: it is proved for the pair of
Theorem~\ref{th:pair} under $d < T$ and assumed in
Conjecture~\ref{conj:blind}. Let
\begin{equation}
g(\delta) := F(\delta) - \delta
\label{eq:increment}
\end{equation}
be the increment that map applies. The coupling is \emph{phase-blind at order
$k$} if $g$ is constant, that is if the increment does not depend on the phase
differences themselves. Identical units then have $g \equiv 0$, since exchange
symmetry fixes the synchronous configuration and $g$ is constant; units that
differ are transported by a translation set by their detuning alone on a fixed
event-order branch with the cap not binding
(Proposition~\ref{prop:hetero}), and not in general (\S\ref{sec:hetero}).
Phase-blindness is the exact vanishing of the coupling, not of the frequency
spread.
\end{definition}

\noindent The hierarchy of orders is the hierarchy of subsystems: order $k$ is a
statement about $k$ units \emph{in isolation}, not a term in a series expansion.
Blindness stopping at order three therefore means that no dependence on relative
phase appears before three jobs are present, and Theorem~\ref{th:nbody} exhibits
it at exactly three.

\paragraph{Dimensionless parameters.} Set $T_i \equiv 1$ for the homogeneous
case. The two parameters that survive are the \emph{duty} $d$, the solo write
duration in units of the compute time, and the \emph{aggregate demand}
$L := \sum_i d_i = Nd$, the write volume the fleet requests per unit of compute
time. Neither is a duty cycle in the usual sense: a job alone occupies the
fabric for a fraction $d/(1+d)$ of its free period $P_0 = 1 + d$. A cyclic
schedule giving every write an interval of at least $d$ to itself exists
exactly when
\begin{equation}
Nd \le 1 + d ,
\qquad\text{equivalently}\qquad
L \le L^\star := \frac{N}{N-1} ,
\label{eq:sat}
\end{equation}
which tends to $1$ from above as the fleet grows. We call $L^\star$ the
\emph{stagger-feasibility threshold} rather than a saturation threshold: it is
a geometric condition on the gaps, and it is separately observed
(\S\ref{sec:trap}) that the effective cycle begins to stretch beyond $P_0$ once
it is crossed. The same threshold governs Proposition~\ref{prop:frozen}.

\section{Order two: anonymous resources are phase-blind}
\label{sec:pair}

The argument rests on a single accounting identity, and the quantity it
conserves is not the phase gap itself but the difference in accumulated work.
Write $v_1 := \min\{1, C\}$ for the compute rate of a job that computes while
the other writes, and $v_2 := \min\{1, C/2\}$ for the rate when both compute;
$v_1 = v_2 = 1$ exactly when $C \ge 2$. Let $W_i, K_i \subset [0,t]$ be the sets
of times at which job $i$ writes and computes. Assumption~\ref{as:blocking}
makes them a partition of $[0,t]$, which is the hypothesis that carries
everything below:
\begin{equation}
|K_i| = t - |W_i| .
\label{eq:partition}
\end{equation}
With two jobs, ``job $i$ writes and job $j$ does not'' means job $i$ is the
\emph{only} writer, so it drains at $f(1) = 1$; likewise a job computing while
the other writes is the only computer and advances at $v_1$. An interval on which
one job is alone is one on which its rate is known, and anonymity makes that rate
the same for whichever job is alone.

\begin{lemma}[Equal occupancy, then equal solo work]
\label{lem:deficit}
Let $N=2$ with identical jobs ($T_i \equiv T$, $d_i \equiv d$) under
Assumptions~\ref{as:blocking}--\ref{as:sharing}, and let $t$ be an instant at
which both jobs are computing, having completed $k_1$ and $k_2$ writes
respectively and started no further one. Then, for any cap $C > 0$,
\begin{equation}
|W_1| - |W_2| = (k_1 - k_2)\,d , \qquad
|K_1 \setminus K_2| - |K_2 \setminus K_1| = (k_2 - k_1)\,d .
\label{eq:occupancy}
\end{equation}
Consequently the two jobs have accumulated the same amount of compute work
outside the intervals on which both compute, up to a term set by the write
offset alone: writing $w_i(t)$ for the compute work job $i$ has accumulated
since time $0$, the difference $\Delta w(t) := w_1(t) - w_2(t)$ depends on $t$
only through $k_1 - k_2$. Two counters have to be kept apart. Decompose
$w_i(t) = k_i T + x_i(t)$, with $x_i(t) \in [0, T)$ the work banked in the
current cycle, which is what the next firing time is set by; then along a
sequence of section instants of \emph{constant} offset $k_1 - k_2$ the two
differences $\Delta w$ and $\Delta x := x_1 - x_2$ are constant together, and
$|\Delta x| < T$ whatever the offset.
\end{lemma}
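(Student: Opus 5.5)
The plan is to account for write volume first and only then convert time into compute work. Since $t$ is an instant at which both jobs compute and neither has a write in progress, each job has drained exactly $k_i$ full checkpoints, so $\int_{W_i} r_i(s)\,\dd s = k_i d$.

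\textbf{Step 1: split the write time by concurrency.} Partition $W_i = (W_1\cap W_2) \cup (W_i\setminus W_j)$. On $W_i\setminus W_j$ job $i$ is the sole writer and drains at $f(1)=1$. On $W_1\cap W_2$ both drain at the common rate $f(2)/2$, and this is where anonymity is used: the rate does not depend on which job it is. Hence
\[
k_i d = |W_i\setminus W_j| + \tfrac{f(2)}{2}\,|W_1\cap W_2| .
\]
Subtracting the two identities removes the shared term and gives $|W_1\setminus W_2| - |W_2\setminus W_1| = (k_1-k_2)d$. Adding $|W_1\cap W_2|$ to both members yields the first identity in \eqref{eq:occupancy}.

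\textbf{Step 2: pass to compute sets.} By \eqref{eq:partition} each $K_i$ is the complement of $W_i$ in $[0,t]$. Therefore $K_1\setminus K_2 = W_2\setminus W_1$ and $K_2\setminus K_1 = W_1\setminus W_2$, and the second identity is the first one with the sign reversed.

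\textbf{Step 3: compute work.} Split $K_i$ into $K_1\cap K_2$ and $K_i\setminus K_j$. On $K_1\cap K_2$ both jobs advance at $v_2$, again by anonymity, so this part cancels in $\Delta w$. On $K_i\setminus K_j$ the other job is writing and is not computing, so job $i$ is the only computer and advances at $v_1$. This gives
\[
\Delta w(t) = v_1\bigl(|K_1\setminus K_2| - |K_2\setminus K_1|\bigr) = v_1 (k_2-k_1) d ,
\]
which depends on $t$ only through $k_1-k_2$. This holds for every cap $C$, since $v_1$ and $v_2$ enter only as common rates.

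\textbf{Step 4: the counters $w_i$ and $x_i$.} Write $w_i = k_i T + x_i$. Having completed $k_i$ writes and not started another, job $i$ has completed exactly $k_i$ compute phases, each of work $T$, and is partway through the next. So $x_i\in[0,T)$, which gives $|\Delta x|<T$ for any offset. Moreover $\Delta w = (k_1-k_2)T + \Delta x$. Along section instants with fixed $k_1-k_2$, $\Delta w$ is constant by Step 3, so $\Delta x$ is constant as well.

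\textbf{Main obstacle.} The step needing the most care is the bookkeeping in Step 1. It must be justified that at $t$ no write is partially drained, so the integrated volume is exactly $k_i d$. The hypothesis that $t$ is a section instant, with both jobs computing and no further write started, settles this. One must also check that $f(2)$ appears only through the shared intersection term, so that no fairness assumption beyond anonymity is needed. The remaining steps are measure-theoretic complements.
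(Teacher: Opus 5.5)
Your proof is correct and follows the paper's own argument: you split each job's drained volume $k_i d$ over $W_1\cap W_2$ and $W_i\setminus W_j$, cancel the shared part by anonymity, transfer the result to the compute sets through \eqref{eq:partition}, and let the jointly computing time cancel so that only the solo term at rate $v_1$ survives. Your set identity $K_1\setminus K_2 = W_2\setminus W_1$ is a slightly shorter route than the paper's passage through $|K_1|-|K_2|$, and the paper keeps the additive term $\Delta w(0)$ that you set to zero, so that launch progress counts in $w_i$ and $w_i = k_iT + x_i$ holds literally; an additive constant changes nothing in the conclusion.
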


\begin{proof}
\emph{Writing.} By hypothesis job $i$ has drained exactly $k_i d$ by time $t$,
with no partial write outstanding. Split that volume over $W_1 \cap W_2$ and
$W_i \setminus W_j$. On the common part both writers receive the same rate,
whatever it is, so the volumes drained there are equal; call the common value
$S$. On $W_i \setminus W_j$ job $i$ is the only writer and drains at $f(1) = 1$,
so it drains $|W_i \setminus W_j|$ there. Hence $k_i d = S + |W_i \setminus W_j|$
for $i = 1, 2$, so $|W_1 \setminus W_2| - |W_2 \setminus W_1| = (k_1-k_2)d$ and
therefore $|W_1| - |W_2| = (k_1-k_2)d$.

\emph{Computing.} The partition \eqref{eq:partition} turns that into
$|K_1| - |K_2| = (k_2-k_1)d$, and subtracting the common part $|K_1 \cap K_2|$
gives the second display of \eqref{eq:occupancy}. On $K_i \setminus K_j$ job $i$
computes while job $j$ writes, so it is the only computer and advances at $v_1$;
on $K_1 \cap K_2$ both advance at the same rate and accumulate equally. Every
contribution to $w_1 - w_2$ after time $0$ therefore cancels except the solo
term, leaving $\Delta w(t) = \Delta w(0) + v_1 (k_2 - k_1) d$, a function of the
offset alone.
\end{proof}

\begin{theorem}[Pair neutrality]
\label{th:pair}
Let $N = 2$ with identical jobs under
Assumptions~\ref{as:blocking}--\ref{as:determ}, with $d < T$, with any
combination of the two shared resources and any cap $C>0$. Let $\delta_k$ be the
firing gap on cycle $k$. Then $\delta_{k+1} = \delta_k$ for every $k$ and every
$\delta_0$: the increment \eqref{eq:increment} vanishes identically, and no
phase gap is either attracted or repelled.
\end{theorem}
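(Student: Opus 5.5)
The plan is to reduce the statement to Lemma~\ref{lem:deficit} by choosing the right section and showing that the firing gap is a function of the conserved work difference $\Delta x$ alone. The firing gap $\delta_k$ is then read off from $\Delta x$ at a section instant. The argument has three steps.

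\emph{Step 1: the section exists and recurs with constant offset.} Label the jobs so that job $1$ fires first on cycle $k$, at $s_1^{(k)}$, and job $2$ at $s_2^{(k)} = s_1^{(k)} + \delta_k$ with $0 \le \delta_k < P$. I will show that under $d<T$ there is, on every cycle, an instant at which both jobs compute with equal write counts. A natural candidate is just before job $1$'s next firing. Job $1$ finishes its write after a time between $d$ and $d/f(2)\cdot$(bounded), then needs $T$ of work. Job $2$ fires once and finishes its write before job $1$ has banked $T$ of work, unless the write windows chain.

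This is the step I expect to be the main obstacle. The risk is that the order of events might fail to repeat, that is, job $2$ overtakes job $1$ or the offset changes. To rule this out I would use $d<T$ as follows. While job $1$ computes its $T$ of work, it is at most slowed to rate $v_2$ or $v_1$, and the same slowdown applies to job $2$ on the common computing set. Any window on which job $2$ writes alone lasts at most $d$. Hence job $2$ cannot complete a full extra cycle (which needs $T>d$ of compute) while job $1$ completes none. Consequently the write counts differ by at most one and return to equality before every firing of job $1$, so the offset is constant (zero) along the chosen sequence of section instants $t_k$.

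\emph{Step 2: invariance of $\Delta x$.} Apply Lemma~\ref{lem:deficit} at $t_k$ with $k_1=k_2$. This gives $\Delta w(t_k)=\Delta w(0)$, and therefore $\Delta x(t_k)=\Delta x(t_0)$ for all $k$.

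\emph{Step 3: the gap is a function of $\Delta x$.} Recover $\delta_k$ from $\Delta x(t_k)$. Take $t_k$ to be in the last interval before job $1$ fires, on which both jobs compute at the common rate $v_2$. Job $i$ fires when $x_i$ reaches $T$, so job $1$ fires after the remaining work $(T-x_1)/v_2$. Job $2$ lags by exactly $\Delta x$ of work. If job $2$ still computes through that lag at rate $v_2$ (cap-dependent but identical to the previous cycle's regime), then $\delta_k = \Delta x/v_2$.

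More carefully, I would not compute $\delta$ explicitly. Instead I would argue that the whole hybrid state at $t_k$ is the pair $(x_1,x_2)$ modulo a uniform translation, so it is determined by $\Delta x$. By the deterministic, autonomous flow (Assumption~\ref{as:determ}) and anonymity, the subsequent firing gap is some fixed function $G(\Delta x)$ that does not depend on $k$. Since $\Delta x$ is constant along the sequence, $\delta_{k+1}=G(\Delta x(t_{k+1}))=G(\Delta x(t_k))=\delta_k$. The remaining care is the degenerate case $\delta_0=0$ and gaps near a full period, which are handled by the same section with the roles of the jobs relabelled. Note that the conclusion holds for every $C$, because the lemma was already cap-independent.
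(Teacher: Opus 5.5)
Your Steps 2 and 3 are sound. Step 3 is a valid shortcut: once the section recurs with constant offset, the state there is $(x_1,x_2)$ modulo a uniform translation, so the next gap is a fixed function of $\Delta x$, and the explicit $\theta$ of \eqref{eq:theta} is not needed to conclude $\delta_{k+1}=\delta_k$. The formula you wrote and then dropped is wrong, though. Once job~1 fires, job~2 is the only computer and advances at $v_1$, not $v_2$, so the gap is $\Delta x/v_1$ for $\Delta x \le v_1 d$ and $d+(\Delta x - v_1 d)/v_2$ beyond that. The gap is in Step~1, which you rightly call the crux but do not close. What must be shown is that each job completes exactly one write between consecutive section instants. ``The write counts differ by at most one'' does not give this, because it allows the leader to change (firing pattern $1,2,2,1,\dots$). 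And ``return to equality before every firing of job~1'' is asserted, not derived.

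The danger is that the leader fires again before the laggard's write ends. Your bound, that a solo write lasts at most $d<T$, excludes this only when the leader enters that write with an empty accumulator. That happens on the branch $\Delta x \le v_1 d$: job~2 fires during job~1's write, job~1 finishes first, and it then banks exactly $\Delta x \le v_1 d < T$ while job~2 drains its residue. On the branch $\Delta x > v_1 d$ it fails. There job~1 has already banked $\Delta x - v_1 d$ before job~2 fires, and banks $v_1 d$ more during job~2's solo write, ending at exactly $\Delta x$. What stops it firing is $\Delta x<T$ (that is, $x_1<T$ and $x_2\ge 0$), not $d<T$, and you only see this by following the cycle to the end of job~2's write. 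That is exactly the last paragraph of the paper's proof: in both branches job~1 holds $\Delta x$ and job~2 nothing when job~2's write ends, so that instant is a section instant with the offset unchanged.

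Your ingredient does suffice for the \emph{first} section instant from an arbitrary state, since a job starting from zero cannot fire during the other's solo write ($v_1 d \le d < T$). This is a local counterpart of the paper's contradiction argument giving $T \le d\,(1+2/(k_1+k_2))$. State it that way, rather than taking ``just before job~1's next firing'' as the candidate, which presupposes the cycle structure you are trying to establish.
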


\begin{proof}
The section instants of Lemma~\ref{lem:deficit} recur, and establishing that
needs no control over the effective period, which a binding cap makes several
times $P_0$. Suppose, for contradiction, that the two jobs never compute
simultaneously on $[0,t]$. Assumption~\ref{as:blocking} makes the phases
exclusive, so job $1$ computing forces job $2$ to be writing: $K_1 \subseteq W_2$,
and on $K_1$ job $2$ is then the \emph{only} writer, hence drains at $f(1) = 1$.
The volume it drains on $K_1$ is therefore $|K_1|$, which cannot exceed the
volume it has drained by time $t$, itself at most $(k_2+1)d$ with $k_2$ its
number of completed writes; and job $1$ has accumulated $k_1 T$ of work at a rate
at most $1$, so $|K_1| \ge k_1 T$. Hence $k_1 T \le (k_2+1)\,d$, and
symmetrically $k_2 T \le (k_1+1)\,d$. Every rate is bounded below, by
$\min\{1, C/2\}$ in the compute phase and by $f(2)/2$ in the write phase, so
$k_1$ and $k_2$ grow without bound with $t$; summing the two inequalities gives
$T \le d\,\bigl(1 + 2/(k_1+k_2)\bigr)$ and hence $T \le d$ in the limit, against
the hypothesis $d < T$. The two jobs therefore compute simultaneously at some
instant, and the argument applied from that instant onward makes such instants
recur. That consecutive ones are separated by exactly one write of each job,
which is what keeps the offset $k_1 - k_2$ constant along the sequence and lets
Lemma~\ref{lem:deficit} return one and the same $\Delta x$ at all of them, is
established at the end of the proof by following one cycle through.

Let $t$ be such an instant and suppose $\Delta x := \Delta x(t) > 0$, so job $1$
leads. From $t$ both jobs compute, at the common rate $v_2$, until job $1$
completes its remaining work $T - x_1$ and fires; at that moment job $2$ still
needs exactly $\Delta x$. It accumulates that residue while job $1$ writes, alone
and therefore at $v_1$, for at most the solo write duration $d$; if it has not
fired by then, job $1$ resumes computing and the rate returns to $v_2$. The
firing gap is therefore
\begin{equation}
\delta_{k+1} = \theta(\Delta x) :=
\begin{cases}
\Delta x / v_1, & \Delta x \le v_1 d, \\[2pt]
d + (\Delta x - v_1 d)/v_2, & \Delta x > v_1 d,
\end{cases}
\label{eq:theta}
\end{equation}
a strictly increasing function of $\Delta x$ alone. Lemma~\ref{lem:deficit}
makes $\Delta x$ the same at every section instant, so $\delta_{k+1}$ is the
same on every cycle, and in particular equals $\delta_k$.

It remains to follow each case to the end of the cycle, which is what shows that
neither job gains a write on the other. If $\Delta x \le v_1 d$, job $2$ fires
while job $1$ is still writing, job $1$ having drained $\Delta x / v_1 \le d$ by
then. From that instant both write and receive the same rate, so job $1$ keeps
its lead $\Delta x / v_1$ in drained volume and finishes first, leaving job $2$
exactly that residue to drain alone at rate $1$; job $1$ computes meanwhile, and
is the only computer, so it accumulates $v_1 \cdot \Delta x / v_1 = \Delta x$ by
the time job $2$ finishes. If $\Delta x > v_1 d$, job $1$ drains alone and
finishes in $d$, both compute at $v_2$ from then on, and job $2$ fires first
since it needs $(\Delta x - v_1 d)/v_2 < T/v_2$ against the $T/v_2$ job $1$ needs
from an empty accumulator; job $2$ then writes alone for exactly $d$ while job
$1$ computes alone at $v_1$, so job $1$ holds $(\Delta x - v_1 d) + v_1 d
= \Delta x$ when that write ends. In both cases job $1$ has banked $\Delta x$
in the new cycle, which is smaller than $T$ by Lemma~\ref{lem:deficit}, so it has
not fired again; job $2$ has banked nothing, and the instant at which job $2$'s
write ends is a section instant at which each job has completed exactly one
further write, so the offset is unchanged and the current-cycle difference is
again $\Delta x$.
\end{proof}

\begin{remark}[On the hypothesis $d < T$, and on the conserved quantity]
\label{rem:dT}
The restriction is used only to make the section nonempty, and $d < T$ is the
regime of interest, a checkpoint being shorter than the compute interval it
protects. Whether the conclusion survives $d \ge T$ we do not know: the
integrator returns the same exact neutrality at $d = 1.5$ and $d = 2.5$ with
$T = 1$ at every cap tested, which we report as evidence and not as a theorem.
The invariant, in either case, is the current-cycle work difference $\Delta x$
and not the firing gap, of which it is the preimage under \eqref{eq:theta}. The
distinction is invisible when $C \ge 2$, where $\theta$ is the identity, and it
is why a proof treating the isolated firing time as $s_i + T + d$ fails for
$C < 1$, where no job ever runs at unit rate and that reference schedule is not
the free one; the same caution rules out bounding an event separation by a
fraction of $P_0$, the measured period reaching $3.8\,P_0$ at $C = 0.5$.
\end{remark}

\begin{remark}[What the proof uses]
\label{rem:robust}
No step used the value of any rate, only that two jobs in the same phase receive
the same one and that a job alone in a phase receives a rate independent of its
identity. Neutrality therefore holds for a cap so severe that a single job
cannot run at nominal speed ($C < 1$), a regime in which the instantaneous rates
of the two jobs do \emph{not} coincide: equality of \emph{occupancy} restores
over a cycle what the rates break at each instant (Table~\ref{tab:p1}). The
proof also survives letting writers draw on the power budget,
$n_c \to n_c + \alpha n_w$ with $\alpha \ge 0$, which changes $v_1$ in
\eqref{eq:theta} and nothing else. What it does not survive is unequal
allocation, nor, once the cap binds, unequal jobs (\S\ref{sec:limits}).
\end{remark}

\begin{proposition}[Heterogeneous jobs, cap not binding: local translation on a
fixed event-order branch]
\label{prop:hetero}
Let two jobs differ in write volume or compute work, with $C \ge 2$, and let
$\delta$ be the firing gap measured on the section at which the leading job
fires. On a branch along which the event ordering is unchanged,
\begin{equation}
\delta' = \delta + (d_2 - d_1) + (T_2 - T_1) ,
\label{eq:pairmap}
\end{equation}
a translation with no dependence on $\delta$: the paired firing times advance by
$P_2 - P_1$ per cycle, with $P_i := T_i + d_i$. That is the difference of the two
free periods and not a claim that either job runs at its own: where the branch
hypothesis fails, below, neither does. Two
conventions have to be fixed for this to mean anything. The labels are the jobs
themselves and not their ranks, so they do not exchange when the lagging job
overtakes the leading one; and $\delta$ is not reduced modulo a period, there
being two periods to choose between, so it grows without bound. The hypothesis
on the event ordering is not vacuous, and what happens when it fails is
instructive: at $d_1 = 0.1$, $d_2 = 0.3$ the writes overlap on $24$ of the $106$
cycles, and on those the two jobs are delayed by the same amount, exactly as
Theorem~\ref{th:pair} requires, but not on the same indexed cycle, their periods
differing. Pairing firings by index therefore moves the increment to $0.1$ or
$0.3$ on those cycles while its average over the run stays at $P_2 - P_1$.
\end{proposition}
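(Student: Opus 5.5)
The plan is to reduce everything to the write channel and then do the same accounting as in Lemma~\ref{lem:deficit}, keeping track of firing times rather than of work. Since $C \ge 2$ and at most two jobs ever compute, every computing job advances at $v = \min\{1, C/n_c\} = 1$, whatever the other job is doing. So the compute phase of job $i$ on any cycle lasts exactly $T_i$ in wall-clock time. The whole content of the statement then sits in the write phases, and I would write the next firing time of job $i$ as
\[
s_i' = s_i + \tau_i + T_i ,
\]
where $\tau_i$ is the wall-clock duration of the write job $i$ begins at $s_i$. The increment of the gap is then $\delta' - \delta = (\tau_2 - \tau_1) + (T_2 - T_1)$, and it remains to show that $\tau_2 - \tau_1 = d_2 - d_1$ on the branch.

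\emph{Step 1, the write durations.} I would fix the event-order branch: the leading job fires at $s_1$, the lagging job at $s_2 = s_1 + \delta$, and on this branch the indexed writes either do not overlap at all or overlap on a single common interval $I$ of length $\lambda$. The hypothesis is that the branch is the same on consecutive cycles.
\begin{itemize}[leftmargin=*,itemsep=1pt]
\item If the writes do not overlap, each job writes alone at $f(1) = 1$. Then $\tau_i = d_i$ and the claim is immediate.
\item If they overlap on $I$, both jobs drain at the same rate $f(2)/2$ there by Assumption~\ref{as:sharing}, and at rate $1$ outside $I$. Job $i$ therefore drains $d_i = \lambda f(2)/2 + (\tau_i - \lambda)$, which gives $\tau_i = d_i + \lambda\,(1 - f(2)/2)$.
\end{itemize}
The extra delay $\lambda(1 - f(2)/2)$ is common to both jobs, because $I$ is a common interval. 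This is exactly the anonymity mechanism behind Theorem~\ref{th:pair}. It gives $\tau_2 - \tau_1 = d_2 - d_1$, and hence \eqref{eq:pairmap}.

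\emph{Step 2, iterating with the conventions.} Summing the per-cycle identity shows that the paired firing times advance by $P_2 - P_1$ per cycle. This uses the stated conventions: labels are the jobs, not their ranks, and $\delta$ is not reduced modulo a period. I would also check that when $\delta$ changes sign (overtaking), the computation in Step 1 still holds verbatim, since nothing in it used which job leads.

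\emph{Main obstacle.} The delicate point is the attribution of the common delay to the \emph{same indexed} cycle. An overlap of job $1$'s $k$-th write may be with job $2$'s $(k\pm1)$-th write. In that case the equal delay lands on differently indexed cycles, and the per-index increment is shifted, which is the $0.1$/$0.3$ effect described in the statement. So the branch hypothesis must be formalised precisely as follows: on each cycle, any overlap is between writes of the same index, and the order of the four events (two firings, two write ends) is unchanged. Under that formalisation Step 1 applies cycle by cycle. I would first try to make this rigorous by defining the branch through this index-matching condition directly. I would then verify that it is consistent with measuring $\delta$ on the section where the leading job fires.
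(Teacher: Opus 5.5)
Your proposal is correct and follows the paper's own proof. Anonymity makes both jobs drain the same volume on their common overlap, which gives write durations differing by exactly $d_2 - d_1$; $C \ge 2$ makes every compute phase last $T_i$; and summing the two contributions yields \eqref{eq:pairmap}. Your explicit index-matching formalisation of the branch hypothesis spells out what the paper leaves implicit, and it is consistent with the paper's account of the $0.1$/$0.3$ increments on cycles whose overlapping writes carry different indices.
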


\begin{proof}
Rerun the proof of Lemma~\ref{lem:deficit} with $d_1 \ne d_2$. The common
drained volume $S$ is still the same for both, so $d_i = S + |W_i \setminus W_j|$
now gives $|W_2| - |W_1| = d_2 - d_1$: the write phase transports the lag and
adds the volume difference. With $C \ge 2$ every computing job advances at unit
rate whether alone or not, so $v_1 = v_2 = 1$, the map \eqref{eq:theta} is the
identity, and the compute phase likewise transports the lag and adds
$T_2 - T_1$. Summing the two gives \eqref{eq:pairmap}.
\end{proof}

\noindent The hypothesis $C \ge 2$ is not cosmetic. When the cap binds,
$v_1 \ne v_2$, the two jobs no longer convert a work lead into a time lag at the
same rate, and a heterogeneous pair acquires a genuine coupling even though both
resources stay anonymous. That is the counterexample of \S\ref{sec:limits}, and
it is the sharpest limit on how far this section generalises.

The interpretation is the point of departure for everything that follows. Both
Kuramoto theory and Mirollo--Strogatz theory are built on a pairwise
interaction, a coupling function $\Gamma(\theta_i - \theta_j)$ whose shape
decides whether the synchronous state attracts, and for two identical jobs that
object is identically zero: contention charges them \emph{equal occupancy},
which transports their work difference unchanged, and a preserved work
difference translates the pair rather than rotating its relative phase.

\section{Order three: the \texorpdfstring{$N$}{N}-writer return map}
\label{sec:nbody}

\begin{theorem}[Diagonal return map]
\label{th:nbody}
Let $N$ identical jobs ($T_i \equiv 1$, $d_i \equiv d$) under a cap $C \ge N$,
that is under storage contention alone, fire at times
$0 = s_1 < s_2 < \cdots < s_N < d$, so that all $N$ write windows overlap. Write
$a_j := s_{j+1} - s_j$ for $j = 1, \dots, N-1$. Then, with work-conserving
sharing, the intervals between consecutive firings on the next cycle are
\begin{equation}
a_j' = \frac{N-j}{j}\, a_j , \qquad j = 1, \dots, N-1 .
\label{eq:nmap}
\end{equation}
The map is diagonal, its spectrum is $\{\lambda_j = (N-j)/j\}_{j=1}^{N-1}$, and
$\prod_{j} \lambda_j = 1$.
\end{theorem}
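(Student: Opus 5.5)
The plan is to compute the next cycle's firing times directly, by tracking the write phase, and then to use the fact that the compute phase is a pure translation. With $C \ge N$ every computing job advances at unit rate however many others compute. Job $j$ therefore fires again exactly one unit of time after its write completes: $s_j' = e_j + 1$, where $e_j$ is its write-completion time. Hence $a_j' = e_{j+1} - e_j$, and the whole theorem reduces to computing the completion times of $N$ overlapping processor-sharing writes. I will do the computation for a general anonymous throughput $f$ and then specialise to $f \equiv 1$, since this yields the remark on $(N-j)f(j)/(j\,f(N-j))$ at no extra cost.

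\emph{Arrival stage.} On $[s_j, s_{j+1})$ there are exactly $j$ writers, each draining at $f(j)/j$. This requires that no write finishes before $s_N$. Job $1$ has drained $\sum_{k=1}^{N-1} a_k f(k)/k$ by time $s_N$. Since $f(k) \le k$, this is at most $s_N < d$, and this is the only place the hypothesis $f(n)\le n$ enters. At $s_N$ job $i$ has therefore drained
\[
D_i = \sum_{k=i}^{N-1} a_k\, \frac{f(k)}{k},
\qquad\text{so}\qquad
D_j - D_{j+1} = a_j\, \frac{f(j)}{j}.
\]
Because all writers present receive equal rates, the order of drained volumes is preserved for the rest of the write phase, exactly as in the writing half of the proof of Lemma~\ref{lem:deficit}. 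The writes therefore complete in firing order $1, 2, \dots, N$, and the cyclic order is unchanged.

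\emph{Departure stage.} After $e_j$ there remain $N-j$ writers, each draining at $f(N-j)/(N-j)$. At time $e_j$ job $j+1$ still has to drain exactly the volume lead that job $j$ had over it, namely $D_j - D_{j+1}$, because equal rates preserve volume differences. Hence
\[
e_{j+1} - e_j = \frac{N-j}{f(N-j)}\,\frac{f(j)}{j}\,a_j .
\]
With $f \equiv 1$ this is \eqref{eq:nmap}. The map is diagonal because each volume lead is created by a single arrival interval and destroyed by a single departure interval. The product of the eigenvalues telescopes, $\prod_{j=1}^{N-1} (N-j)/j = (N-1)!/(N-1)! = 1$, and the factors $f(j)/f(N-j)$ cancel in pairs under $j \leftrightarrow N-j$.

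The step I expect to need the most care is the bookkeeping that justifies the branch. This has two parts: first, showing that no job finishes writing before the last one has fired, which the bound $D_1 \le s_N < d$ settles; and second, showing that the completion order equals the firing order. One further point needs checking: the claim concerns only the intervals on the next cycle, not whether the new configuration again satisfies $s_N' - s_1' < d$. So I will not need the all-overlap hypothesis to propagate. The translation $s_j' = e_j + 1$ holds regardless of how the next writes overlap, because it uses only the uncapped compute rate.
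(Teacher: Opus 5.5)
Your decomposition is the paper's: the lead $a_j f(j)/j$ is banked while $j$ jobs write and cashed out while $N-j$ do, the jobs complete in firing order, and the uncapped compute phase is a unit translation. You also fold in Proposition~\ref{prop:anon} at no extra cost.

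One step is not justified, and it is the one your closing paragraph dismisses. The departure stage needs the writers on $[e_j,e_{j+1})$ to be exactly jobs $j+1,\dots,N$. That means no job may fire again before $e_N$. The earliest refiring is job $1$'s, at $e_1+1$, so the requirement is $e_N\le e_1+1$. The identity $s_j'=e_j+1$ does hold whatever happens, as you say. But a new write can start inside the \emph{old} departure stage, and then $n_w$ there is not $N-j$.

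The hypotheses allow this. Take $N=3$, $d=0.9$, $C=\infty$, $s=(0,0.85,0.89)$. Job $1$ finishes at $e_1=0.98$ and refires at $1.98$, before your predicted $e_2=0.98+2(0.85)=2.68$. Following the three-writer phase that actually occurs gives $e_2=3.03$ and $e_3=3.07$. Hence $(a_1',a_2')=(2.05,\,0.04)$ instead of $(1.7,\,0.02)$. This is a failure of the conclusion itself, not just of your proof, and it happens inside $d<T$.

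The repair is an added hypothesis, not a new idea. Without refiring, $e_N-e_1=\sum_j\lambda_j a_j\le \max_j\lambda_j\, s_N< d\max_j\lambda_j$, so $d\max_j\lambda_j\le1$ suffices.
\begin{itemize}
\item For $f\equiv1$ this reads $(N-1)d\le1$, which is exactly $L\le L^\star$ of \eqref{eq:sat}.
\item For a general anonymous $f$ it is a restriction that depends on $f$. Since $f(N-j)$ may be arbitrarily small, no bound on $d$ alone secures your general-$f$ formula.
\end{itemize}

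The paper's proof makes the same tacit assertion (``on $[t_j,t_{j+1})$ the writers are exactly the jobs $j+1,\dots,N$''). So the gap lies in the statement as much as in your argument. Still, when you write the proof out, this check belongs in your branch bookkeeping next to $D_1\le s_N<d$. It should replace your remark that the translation holds regardless of how the next writes overlap.
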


\begin{proof}
First, the hypothesis $s_N < d$ does force all $N$ windows to overlap: by
$f(n) \le n$ a job drains at rate at most $1$, so no job can finish before a
time $d$ has elapsed since its own start, and in particular
$t_1 \ge s_1 + d > s_N$. Hence every job has started, and none has finished, at
time $s_N$.

Since the drain rate is $1/n_w$ and is common to all active writers, the
\emph{difference} in drained volume between two jobs that are both writing is
constant in time. On $[s_j, s_{j+1})$ exactly $j$ jobs write, namely
$1, \dots, j$, none of which has finished by the previous paragraph, so job $j$
drains $a_j / j$ while job $j+1$ has not started; from $s_{j+1}$ onward both are
active whenever both are unfinished, so job $j$ retains the lead
\begin{equation}
\Lambda_j = \frac{a_j}{j}
\label{eq:lead}
\end{equation}
until it finishes. The leads are strictly positive, so jobs finish in their
firing order, $t_1 < t_2 < \cdots < t_N$. At $t_j$, job $j$ has drained $d$,
hence job $j+1$ has drained $d - \Lambda_j$ and has $\Lambda_j$ remaining. On
$[t_j, t_{j+1})$ the writers are exactly the jobs $j+1, \dots, N$, that is
$n_w = N - j$, so job $j+1$ drains at rate $1/(N-j)$ and
$t_{j+1} - t_j = (N-j)\,\Lambda_j$. Each job then computes for one unit of time:
$n_c \le N \le C$ at every instant, so $\min\{1, C/n_c\} = 1$ and the compute
duration is the same for all jobs whatever the order in which they finish
writing. This is the only step that uses $C \ge N$, and it fails as soon as the
cap binds. The next firing times are therefore $t_j + 1$ and the new intervals
are $a_j' = t_{j+1} - t_j$, which is \eqref{eq:nmap}. Finally
$\prod_{j=1}^{N-1} (N-j)/j = (N-1)!/(N-1)! = 1$.
\end{proof}

\begin{corollary}[Reciprocal spectrum; synchrony is a fixed point, not an
attractor]
\label{cor:spec}
The map \eqref{eq:nmap} preserves phase-space volume on its branch and its
spectrum is reciprocal, $\lambda_j \lambda_{N-j} = 1$, with
$\lceil N/2 \rceil - 1$ expanding directions and as many contracting ones; for
odd $N$ these exhaust the spectrum, while for even $N$ the mode $j = N/2$ has
$\lambda = 1$. The synchronous configuration $a = 0$ is the apex of the closed
cone $\{a_j \ge 0\}$ on which \eqref{eq:nmap} is linear and is a fixed point of
it, not an interior point of the branch, whose hypothesis is a strict ordering.
The statement is therefore directional rather than a linearisation at a smooth
point: a perturbation of synchrony falls in one of the $N!$ cones indexed by the
order in which the jobs fire inside the burst, $N$ of which share each of the
$(N-1)!$ cyclic orders that Proposition~\ref{prop:order} conserves; relabelling
identical jobs exchanges those cones,
and on each of them the map is \eqref{eq:nmap}. Since $\lambda_1 = N - 1 > 1$
for $N \ge 3$, every perturbation with $a_1 > 0$ is amplified and no
neighbourhood of $a = 0$ is contracted; for $N = 2$ the map is neutral,
consistently with Theorem~\ref{th:pair}. More generally a cluster state, in which
$a_j = 0$ for the ranks inside each cluster, is invariant by exchange symmetry,
and the cluster occupying ranks $j, j+1$ tightens if $j > N/2$ and disperses if
$j < N/2$. What $\det = 1$ excludes is a fixed point of this branch map that is
asymptotically stable in \emph{all} of its directions; it does not exclude
attraction to a cluster, a manifold of lower dimension being able to attract
transversally while expanding along itself, and Corollary~\ref{cor:recruit}
exhibits exactly that.
\end{corollary}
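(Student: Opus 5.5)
The corollary is read off the diagonal form \eqref{eq:nmap} of Theorem~\ref{th:nbody}, so I take the claims in turn, starting with volume and spectrum. A diagonal linear map has Jacobian $\prod_j \lambda_j$, which the theorem shows equals $1$, so Lebesgue measure on the branch is preserved. Reciprocity is the identity
\[
\lambda_j\lambda_{N-j} = \frac{N-j}{j}\cdot\frac{j}{N-j} = 1 .
\]
To count expanding directions, note that $\lambda_j > 1$ exactly when $N-j > j$, that is $j < N/2$. There are $\lceil N/2\rceil - 1$ such integers $j \ge 1$; by the reciprocity just shown there are equally many with $\lambda_j<1$. For even $N$ the single remaining index is $j=N/2$, with $\lambda=1$.

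Next the cone structure. The hypothesis of Theorem~\ref{th:nbody} is the strict ordering $0=s_1<\dots<s_N<d$, so the branch is the open cone $\{a_j>0\}$ cut by $\sum_j a_j<d$. The map \eqref{eq:nmap} is linear and extends continuously to the closed cone. The apex $a=0$ is sent to itself but is a boundary point, so any stability statement can only be directional. A perturbation of synchrony fixes some firing order inside the burst, giving $N!$ cones. Grouping them by cyclic order gives $N$ cones per class, hence $(N-1)!$ classes, matching the invariant of Proposition~\ref{prop:order}. Because the jobs are identical and the rules anonymous, relabelling conjugates the dynamics on one cone to the dynamics on another, so the same formula \eqref{eq:nmap} holds on each cone in rank coordinates.

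Then the consequences. For $N\ge3$, $\lambda_1=N-1>1$, so $a_1$ grows geometrically while the orbit stays on the branch. Any neighbourhood of $0$ contains points with $a_1>0$, so no neighbourhood is contracted. For $N=2$ the only eigenvalue is $\lambda_1=1$, consistent with Theorem~\ref{th:pair}. For cluster states: if $a_j=0$ on a set of ranks, the diagonal form keeps those coordinates zero, which is the invariance. I would check this against exchange symmetry in the degenerate case of simultaneous firings by taking the limit of the branch formula. The intracluster gap at ranks $j,j+1$ evolves by $a_j'=\lambda_j a_j$, so it tightens when $j>N/2$ and disperses when $j<N/2$.

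Finally, what $\det=1$ rules out. If the fixed point were asymptotically stable in every direction, every eigenvalue would satisfy $|\lambda_j|<1$ and the product would be $<1$, a contradiction. The converse direction needs no proof here: I would only note that a lower-dimensional invariant set can attract transversally, deferring to Corollary~\ref{cor:recruit}.

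The main obstacle is the boundary issue. Every claim holds only while the orbit stays on the branch, and an expanding $a_1$ eventually violates $\sum a_j<d$. So "amplified" must be stated for the first return, or for as many iterations as the orbit remains inside the cone, rather than asymptotically.
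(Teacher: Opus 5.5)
Your proposal is correct and takes the paper's route: volume preservation, reciprocity, the count of expanding directions and the directional instability at the apex are all read off the diagonal spectrum of \eqref{eq:nmap}, and you spell out the cone bookkeeping more explicitly than the paper does. The one point to tighten is cluster invariance, since a state with $a_j = 0$ lies outside the open branch on which \eqref{eq:nmap} was derived. Extending the formula to it by a limit tacitly assumes the return map is continuous across coincident events, so take exchange symmetry as the proof instead, as the paper does: identical jobs in identical states receive identical rates at every instant and so coincide for all time. That argument also gives invariance off this branch and for any cap, which is what Corollary~\ref{cor:unreachable} later relies on.
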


\begin{proof}
Everything except invariance is read off the spectrum of \eqref{eq:nmap}, which
is a linear diagonal map on the branch and therefore has $a = 0$ as a fixed
point with those eigenvalues. Invariance of a cluster state is exchange
symmetry: identical jobs in identical states receive identical rates at every
instant, so their states coincide for all time.
\end{proof}

\begin{corollary}[Recruitment on an invariant cluster manifold]
\label{cor:recruit}
Let $\mathcal{M}_1 := \{a_1 = 0\}$ be the manifold on which the two leading jobs
fire together. It is invariant, and the restriction of \eqref{eq:nmap} to it
acts on the surviving intervals with the same eigenvalues. For $N = 3$ that
restriction is $a_2' = a_2/2$: the third job is recruited by the pair
geometrically and the fleet tends to full synchrony, in infinite time and
without any two jobs ever coinciding. Synchrony is therefore asymptotically
stable \emph{relative to} $\mathcal{M}_1$ while being unstable in the ambient
space, which a unit determinant does not forbid. Two things confine the
phenomenon and one does not. The transverse direction carries
$\lambda_1 = N - 1 > 1$, so $\mathcal{M}_1$ repels on this branch; and by
Proposition~\ref{prop:order} no trajectory whose jobs fire at distinct instants
\emph{enters} it, so the recruitment lives on a set of measure zero the dynamics
cannot reach in finite time. What neither argument excludes is $a_1^{(k)} > 0$ at
every $k$ with $a_1^{(k)} \to 0$, an asymptotic approach to a cluster along the
other branches of the global map, on which we compute nothing;
\S\ref{sec:order} bounds that by measurement in four cells, not by proof.
Symmetrically, the trailing
manifolds $\{a_j = 0\}$ with $j > N/2$ are transversally contracting on this
branch, but $\lambda_1 > 1$ carries any trajectory with $a_1 > 0$ off the branch
in finitely many cycles (Remark~\ref{rem:domain}), so that contraction is not
asymptotic on this branch either.
\end{corollary}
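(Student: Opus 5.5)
The plan is to work entirely on the branch of Theorem~\ref{th:nbody} and read the claims off the diagonal form \eqref{eq:nmap}, adding the exchange-symmetry argument of Corollary~\ref{cor:spec} for invariance and Proposition~\ref{prop:order} for non-entry.

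\emph{Invariance of $\mathcal{M}_1$.} Take $a_1 = 0$, so that jobs $1$ and $2$ fire at the same instant. They are identical and in identical states, so by Assumption~\ref{as:sharing} they receive identical rates at every instant. Their states therefore coincide for all time, and they fire together on every later cycle; this gives $a_1' = 0$. As a check on the branch itself, \eqref{eq:nmap} gives $a_1' = (N-1)\cdot 0 = 0$.

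\emph{The restricted map.} One point needs care: with $a_1 = 0$ the strict ordering $s_1 < s_2$ fails, so the derivation of Theorem~\ref{th:nbody} has to be rerun with a tied leading pair. I would redo the lead computation. On $[s_2, s_3)$ exactly two jobs write, so job $2$ gains the lead $\Lambda_2 = a_2/2$, and similarly $\Lambda_j = a_j/j$ for $j \ge 2$. Jobs $1$ and $2$ finish simultaneously, at which point $n_w = N-2$, so $t_3 - t_2 = (N-2)\Lambda_2$. The later steps are unchanged, so $a_j' = \frac{N-j}{j} a_j$ for $j \ge 2$: the surviving intervals keep their eigenvalues. For $N = 3$ this reduces to $a_2' = a_2/2$, so $a_2^{(k)} = 2^{-k} a_2^{(0)}$. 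This is positive for every $k$ and tends to zero, so synchrony is approached only asymptotically and the pair never coincides with the third job. I also need the branch hypothesis $s_N < d$ to persist. It does, because $s_3 - s_1 = a_2$ shrinks, so the trajectory stays on the branch forever. This persistence is the step that makes the geometric recruitment a genuine global statement on $\mathcal{M}_1$ rather than a one-step computation.

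\emph{Confinement.} The transverse eigenvalue is $\lambda_1 = N-1 > 1$ from Corollary~\ref{cor:spec}, so $\mathcal{M}_1$ repels on the branch. By Proposition~\ref{prop:order}, distinct firing instants remain distinct, so $\{a_1 = 0\}$ cannot be entered in finite time. The remark about the trailing manifolds follows the same pattern. For $j > N/2$ the transverse eigenvalue $(N-j)/j$ is less than one, but $a_1^{(k)} = (N-1)^k a_1^{(0)}$ grows, so $s_N$ eventually exceeds $d$ and the trajectory leaves the branch in finitely many cycles (Remark~\ref{rem:domain}). The unexcluded scenario, an asymptotic approach along other branches, is a statement that nothing is proved, so there is nothing to prove there. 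The main obstacle is the tie case in the restricted map, and I would handle it with the simultaneous-finish argument above.
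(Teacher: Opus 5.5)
Your proposal is correct and follows the paper's own proof. It gets invariance from exchange symmetry together with diagonality, restricts to $a_2' = a_2/2$ for $N=3$, and shows that the branch condition $\sum_j a_j < d$ persists because the only surviving interval shrinks, which makes the iteration legitimate for all time. Your explicit rerun of the lead computation with a tied leading pair (so that $n_w = N-2$ once the pair finishes together) is a useful addition: the paper covers that point only implicitly, by treating \eqref{eq:nmap} as linear on the closed cone in Corollary~\ref{cor:spec}.
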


\begin{proof}
The map is diagonal, so $a_1 = 0$ gives $a_1' = 0$ and leaves the other
coordinates as in \eqref{eq:nmap}; invariance is again exchange symmetry. For
$N = 3$ the surviving eigenvalue is $\lambda_2 = 1/2$, so $\sum_j a_j$ decreases
and the branch hypothesis $\sum_j a_j < d$ of Remark~\ref{rem:domain} is
preserved at every cycle, which makes the iteration legitimate for all time and
gives $a_2^{(k)} = 2^{-k} a_2^{(0)}$. The integrator returns that ratio to
machine precision from $a_1 = 0$, $a_2 = 0.1$, $d = 0.3$.
\end{proof}

\begin{remark}[What ``volume preserving'' does and does not say]
\label{rem:branchdet}
Both $\det = 1$ and the spectrum are properties of the single branch on which
all $N$ windows overlap and the cap does not bind. The global map is piecewise
affine with many branches; we compute the determinant on no other, and do not
claim the pieces glue into a globally measure-preserving map, which where the cap
binds they demonstrably do not (\S\ref{sec:limits}). Nothing in
\S\ref{sec:numerics} rests on this corollary.
\end{remark}

\begin{proposition}[Volume preservation is anonymity, not fairness]
\label{prop:anon}
Let the writers share an anonymous storage rule of total throughput $f(n_w)$,
each active writer receiving $f(n_w)/n_w$, with $f$ positive and $f(n) \le n$.
Under the hypotheses of Theorem~\ref{th:nbody} the map is still diagonal, with
\begin{equation}
a_j' = \frac{(N-j)\,f(j)}{j\,f(N-j)}\, a_j .
\label{eq:anonmap}
\end{equation}
Its spectrum is again reciprocal, $\lambda_j \lambda_{N-j} = 1$, and
$\det = 1$ for every such $f$.
\end{proposition}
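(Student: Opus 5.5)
The plan is to rerun the proof of Theorem~\ref{th:nbody} with the per-writer rate $1/n_w$ replaced by $f(n_w)/n_w$, and to check at each step that only anonymity was used, never the value of the rate.

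\emph{Step 1: all windows still overlap.} The bound $f(n)\le n$ gives a per-writer rate of at most $f(n)/n \le 1$. So no job finishes before $d$ has elapsed since its own start, and $s_N<d$ again forces every job to have started, and none to have finished, at time $s_N$. This is the one place $f(n)\le n$ enters, as the statement of Assumption~\ref{as:sharing} announces.

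\emph{Step 2: the leads.} On $[s_j,s_{j+1})$ exactly jobs $1,\dots,j$ write, each at rate $f(j)/j$. Job $j$ therefore gains the lead
\[
\Lambda_j=\frac{f(j)}{j}\,a_j
\]
over job $j+1$. From $s_{j+1}$ on, both receive the common anonymous rate $f(n_w)/n_w$, whatever $n_w$ is, so the lead is conserved until job $j$ finishes. Positivity of $f$ makes every lead strictly positive, so completion order equals firing order, $t_1<\dots<t_N$.

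\emph{Step 3: the finishing gaps.} At $t_j$, job $j+1$ has residue $\Lambda_j$. On $[t_j,t_{j+1})$ the writers are exactly jobs $j+1,\dots,N$, so $n_w=N-j$. No new firing intervenes, because each job computes for a full unit of time with $d<1$ implicit in the branch hypothesis (I will check this against the ordering argument of Theorem~\ref{th:nbody}). Job $j+1$ therefore drains at $f(N-j)/(N-j)$, and
\[
t_{j+1}-t_j=\frac{N-j}{f(N-j)}\,\Lambda_j .
\]
Since $C\ge N$, every job then computes for exactly one unit, so $a_j'=t_{j+1}-t_j$, which gives \eqref{eq:anonmap}.

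\emph{Step 4: spectrum and determinant.} Write $\lambda_j=\frac{(N-j)f(j)}{j\,f(N-j)}$. Substituting $j\mapsto N-j$ gives $\lambda_{N-j}=\frac{j\,f(N-j)}{(N-j)f(j)}=1/\lambda_j$, so the spectrum is reciprocal. The determinant then follows by pairing $j$ with $N-j$. For even $N$ the middle factor is $\lambda_{N/2}=1$ directly. Equivalently, $\prod_j f(j)/f(N-j)$ telescopes to $1$ as a reindexing of the same finite product, and it multiplies the factor $1$ from Theorem~\ref{th:nbody}.

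\emph{Main obstacle.} I expect Step 3 to be the delicate point. The claim that exactly jobs $j+1,\dots,N$ are writing on $[t_j,t_{j+1})$ needs two things: no job has re-fired before $t_N$, and the lead of job $j$ was preserved through every change of $n_w$ with no stray event in between. I would handle this as the paper does. Each job's next firing occurs at $t_i+1$. The whole drain phase lasts $t_N-s_1$, which I bound above using the minimum rate $\min_{n\le N} f(n)/n$. This yields a branch condition of the type in Remark~\ref{rem:domain}, and I would state it as a hypothesis rather than attempt to remove it.
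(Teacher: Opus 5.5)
Your proposal is correct and follows the paper's proof step for step: $f(n)\le n$ preserves the event ordering of Theorem~\ref{th:nbody}, the lead $\Lambda_j = a_j f(j)/j$ is banked at $j$ writers, conserved by anonymity, and cashed out at rate $f(N-j)/(N-j)$, and $\det = 1$ follows by pairing $j$ with $N-j$. The paper leaves implicit the condition you flag in Step~3, that no job fires again before $t_N$, and $d<1$ does not supply it: it requires $t_N - t_1 = \sum_j \lambda_j a_j \le 1$, which fails for instance at $N=3$, $d=0.9$, $a_1=0.8$, $a_2=0.05$, so stating it as a branch hypothesis, as you propose, is the right repair.
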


\begin{proof}
The condition $f(n) \le n$ keeps every individual rate at most $1$, so the
event ordering established in the first paragraph of the proof of
Theorem~\ref{th:nbody} is unchanged. The rate then enters in exactly two places.
On $[s_j, s_{j+1})$ there are $j$ writers, so job $j$ drains at $f(j)/j$ and
banks $\Lambda_j = a_j f(j)/j$; on $[t_j, t_{j+1})$ there are $N-j$, so job
$j+1$ cashes that residue out at $f(N-j)/(N-j)$, taking
$t_{j+1} - t_j = \Lambda_j (N-j)/f(N-j)$, which is \eqref{eq:anonmap}. In
between, both jobs receive the same rate whenever both are unfinished, whatever
that rate is, so the lead is unchanged. Anonymity is the only property of $f$
used. Reciprocity is immediate, and $\prod_j \lambda_j = 1$ follows by pairing
$j$ with $N-j$.
\end{proof}

\begin{remark}[Domain of validity]
\label{rem:domain}
Equation \eqref{eq:nmap} holds for one cycle of a configuration satisfying
$s_N < d$, and may be iterated only while $\sum_j a_j' < d$. Since
$\lambda_1 = N-1 > 1$ for $N \ge 3$, the leading interval grows geometrically
and that condition fails after finitely many cycles: the fully overlapping
region is transient, and the spectrum describes one branch of a piecewise-affine
map rather than the asymptotics, which \S\ref{sec:numerics} measures instead.
\end{remark}

\begin{figure}[t]
\centering
\includegraphics[width=0.78\textwidth]{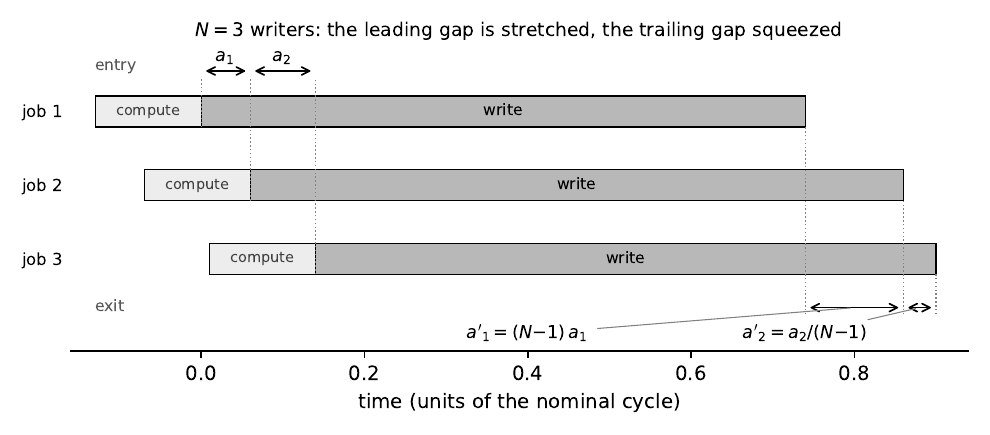}
\caption{The mechanism of Theorem~\ref{th:nbody} for $N=3$. Jobs enter their
write phase at intervals $a_1, a_2$; because a job that entered earlier banked a
lead at a lower level of concurrency and cashes it out at a higher one, the exit
intervals are $a_1' = (N-1) a_1$ and $a_2' = a_2/(N-1)$: the leading gap is
stretched, the trailing gap squeezed, and their product preserved.}
\label{fig:mechanism}
\end{figure}

Two readings of \eqref{eq:nmap} matter physically. \emph{The head disperses}:
$a_1' = (N-1) a_1$, so a job firing slightly ahead of a pack is ejected from it.
\emph{The tail compacts}: $a_{N-1}' = a_{N-1}/(N-1)$, so groups formed at the
back of the burst tighten. Equation \eqref{eq:lead} also shows where the pairwise
cancellation of \S\ref{sec:pair} fails: the lead $\Lambda_j$ is banked at $1/j$,
set by how many \emph{other} jobs were already writing, and cashed out at
$1/(N-j)$, set by how many others still are, both third-party counts. With
$N = 2$ they degenerate to $j = N - j = 1$ and \eqref{eq:nmap} returns
$a_1' = a_1$, recovering Theorem~\ref{th:pair}.

\section{Two invariant structures}
\label{sec:invariant}

\begin{proposition}[The firing order is invariant]
\label{prop:order}
Let $N$ identical jobs run under Assumptions~\ref{as:blocking}--\ref{as:determ},
with any cap $C > 0$ and any anonymous throughput rule. If $s_i^k < s_j^k$ then
$s_i^{k+1} < s_j^{k+1}$. The cyclic firing order is therefore a constant of the
motion, all the intervals $a_j$ keep their sign, and no two jobs ever exchange
rank or coincide.
\end{proposition}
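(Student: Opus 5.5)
The proof is an order-preservation argument on a progress coordinate, using anonymity as the only structural input. Each job carries a scalar \emph{progress} $p_i(t)\in[0,\,T+d)$ within its current cycle: the compute work banked while computing, and $T$ plus the volume drained while writing. Since $T_i\equiv T$ and $d_i\equiv d$, the hybrid state of job $i$ is determined by its cycle count and this one number. Firing happens when $p_i$ crosses $T$. Completing a cycle happens when $p_i$ reaches $T+d$ and resets to $0$.

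The strategy is to show that the lexicographic pair (cycle count, progress) is never overtaken. The conclusion $s_i^{k}<s_j^{k}\Rightarrow s_i^{k+1}<s_j^{k+1}$ then follows by reading firing times as the first crossings of the level $T$ on each cycle.

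\textbf{Step 1 (equal rates at equal state).} Under Assumption~\ref{as:sharing}, $\dot p_i=\rho(\text{mode}_i,n_w,n_c)$. This is $\min\{1,C/n_c\}$ in the compute mode and $f(n_w)/n_w$ in the write mode. It depends on job $i$ only through its mode, and the mode is a function of $p_i$ alone (compute if $p_i<T$, write if $p_i\ge T$). So $\dot p$ is a common function $\Phi(p_i;t)$ of the job's own progress, where the time dependence through $(n_w,n_c)$ is shared by every job. Two jobs in the same state therefore move identically, and by induction over the finitely many events in each bounded interval they coincide forever. This is the exchange-symmetry remark already used in Corollary~\ref{cor:spec}.

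\textbf{Step 2 (no overtaking).} Let job $i$ be strictly ahead of job $j$ at some time. The only way $j$ can pass $i$ is through an instant at which they are tied. At that instant the two states coincide, and by Step~1 they coincide for all later time. Running the argument backward in time requires care, because the flow is piecewise and uniqueness backward can fail where write windows merge. I therefore argue forward only. Let $\tau$ be the first time the lexicographic order of $i$ and $j$ is violated or tied. Continuity of $p$ between resets, together with the fact that resets occur at the common level $T+d$, places a tie exactly at $\tau$. The main obstacle is then to exclude a tie in finite time from a strict lead. Step~1 says only that ties persist, not that they are unreachable.

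\textbf{Handling the tie.} I plan to bound the gap below on each maximal interval with a fixed mode pair. There are three cases:
\begin{itemize}
\item Both jobs in the same mode: the gap in progress is constant, since their rates are equal. This is exactly the constancy of drained-volume differences used in Theorem~\ref{th:nbody} and in Lemma~\ref{lem:deficit}.
\item Leader writing, laggard computing: a tie inside the interval would require the laggard's progress to reach $T$. At that moment the laggard switches mode, and the tie would then occur with both jobs in the same mode and a strictly positive gap. That is a contradiction.
\item Laggard writing, leader computing: this configuration cannot occur without the leader having reset, which is the cyclic case. I treat it by lifting progress to cumulative progress $(k_i(T+d)+p_i)$. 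The leader stays ahead in the lift, and the same two mode-matched cases apply.
\end{itemize}
Since there are finitely many switching events in finite time (every rate is bounded below by $\min\{1,C/N\}$ and $\min_n f(n)/n>0$), induction over events gives a strict gap for all time.

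\textbf{Step 3 (conclusion).} Strict order of cumulative progress gives the ordered firings $s_i^{k+1}<s_j^{k+1}$. It also shows that the cyclic order is constant, that the intervals $a_j$ keep their sign, and that no two jobs coincide. The cap $C$ and the throughput $f$ entered only through Step~1.
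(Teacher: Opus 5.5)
Your proof is correct and runs on the same mechanism as the paper's. Jobs in the same mode receive the same rate, so a lead is carried unchanged. A lead is strict because, when the laggard reaches a phase boundary, the leader is already strictly inside the next phase.

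The organisation differs. The paper follows one cycle forward. A positive lead in drained volume on $[s_i^k,s_j^k)$ gives $e_i^k<e_j^k$. Then either $i$ fires before $e_j^k$ (the only lapping step), or it banks a positive lead in work on $[e_i^k,e_j^k)$ that shared compute rates preserve until it fires. That is short and needs neither a first-violation time nor a count of events. You instead prove a comparison principle for the lifted progress $P_i=k_i(T+d)+p_i$. It orders the jobs at every instant, not only at firings, and absorbs lapping into the lift.

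Two points of execution need repair.
\begin{itemize}
\item Finiteness of switching events in bounded time follows from the rates being bounded \emph{above}: all are at most $1$, using $f(n)\le n$. A lower bound does not give it.
\item In your second case, the point is that the leader is strictly past $T$ when the laggard reaches $T$, having written for a positive time. As written, the sentence says ``a tie occurs with a strictly positive gap'', which is self-contradictory.
\end{itemize}

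You can drop both the case list and the event induction. Let $\tau$ be a first tie, with common value $P^\ast$. For $t$ just below $\tau$, both $P_i(t)$ and $P_j(t)$ lie in $(P^\ast-\epsilon,P^\ast)$, and for small $\epsilon$ this sits inside a single half-open mode interval. The two rates therefore agree there, so the gap is constant on a left neighbourhood of $\tau$. That contradicts its being positive before $\tau$ and zero at $\tau$.

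Your formulation also gives something the paper leaves implicit. The mode is $(T+d)$-periodic in $P$, so $P_j+(T+d)$ satisfies the same comparison. Hence $P_i<P_j+(T+d)$, which holds at a launch with every job computing, is preserved as well. This rules out one job gaining a full cycle on another. That is exactly what turns ordered $k$-th firings into a constant \emph{cyclic} order, a step that both your Step~3 and the paper's ``therefore'' pass over.
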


\begin{proof}
Two steps, each an application of anonymity. \emph{Write phase.} Suppose
$s_i^k < s_j^k$. On $[s_i^k, s_j^k)$ job $i$ drains at a strictly positive rate
and job $j$ has not started, so $i$ acquires a strictly positive lead in drained
volume. Whenever both are writing they receive the same rate, so that lead is
constant; whenever $j$ writes and $i$ does not, $i$ has already finished. In
either case $e_i^k < e_j^k$. \emph{Compute phase.} Two cases, since job $i$ may
reach $T$ before job $j$ has finished writing. If it does, it fires at
$s_i^{k+1} < e_j^k < s_j^{k+1}$, job $j$ having to finish write $k$ and then a
compute phase before firing again, and there is nothing to prove: this is the
only step at which one job laps the other. Otherwise $i$ computes throughout
$[e_i^k, e_j^k)$, at the rate $v(t) = \min\{1, C/n_c(t)\} > 0$, while job $j$ is
still writing and by Assumption~\ref{as:blocking} accumulates no work; $i$ therefore
holds a strictly positive lead in accumulated work at $e_j^k$. From then on both
receive the same $v(t)$ whenever both compute, so the lead is preserved until
$i$ reaches $T$ and fires. Hence $s_i^{k+1} < s_j^{k+1}$.
\end{proof}

\begin{corollary}[Synchrony is invariant but unreachable]
\label{cor:unreachable}
The synchronous set, and more generally each cluster manifold, is invariant
(Corollary~\ref{cor:spec}) and of zero Lebesgue measure, and by
Proposition~\ref{prop:order} it is reached by no trajectory that does not start
in it. A fleet whose jobs fire at distinct instants keeps them distinct for all
time; clustering, if it occurred, could only be asymptotic.
\end{corollary}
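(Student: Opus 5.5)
The statement is Corollary~\ref{cor:unreachable}, and I would prove it by assembling three facts already established: invariance from Corollary~\ref{cor:spec}, a dimension count for measure zero, and unreachability from Proposition~\ref{prop:order}.

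\emph{Invariance.} Identical jobs in identical hybrid states (same mode, same residual work or volume) receive identical rates at every instant, since the resource is anonymous. Their states therefore coincide for all time, so the synchronous set and every cluster manifold $\{a_j = 0 \text{ for the ranks in each cluster}\}$ are forward invariant. For a cap that binds this holds verbatim, because exchange symmetry uses only anonymity and not the branch of Theorem~\ref{th:nbody}.

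\emph{Measure zero.} A cluster manifold is cut out of the $(N-1)$-dimensional space of firing differences by at least one linear equation $a_j = 0$, or equivalently $s_i = s_j$ for some $i \ne j$. It is therefore a finite union of proper linear subspaces intersected with the state space, and has zero Lebesgue measure.

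\emph{Unreachability.} This is the only step with content. Take a trajectory with $s_i^0 \ne s_j^0$ for every pair. Proposition~\ref{prop:order} gives strict inequality $s_i^k < s_j^k \Rightarrow s_i^{k+1} < s_j^{k+1}$, and by induction distinctness persists at every cycle $k$, so no equation $s_i = s_j$ ever holds at a finite $k$. I would then extend this from firing instants to arbitrary times. The point to check is that ``being in a cluster'' means equal hybrid states, not merely equal firing times. Equal hybrid states at some time $t$ would, by forward invariance, force equal next firing times, contradicting distinctness. Conversely, entering the manifold between firings would require equal states at time $t$, which again propagates forward to a coincidence of firing times. For the lapping case in the proof of Proposition~\ref{prop:order}, I would note that the order is cyclic, so ranks are compared modulo relabelling by cycle index; this is harmless because coincidence still means equal residual states.

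\emph{Clustering only asymptotic.} The final clause follows directly: any approach to a cluster must satisfy $a_j^{(k)} > 0$ for all $k$, so at best $a_j^{(k)} \to 0$ as $k \to \infty$.

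The main obstacle is making the ``distinct forever'' claim hold in continuous time and not only at firing instants, including across laps. The plan handles it by arguing by contradiction from forward invariance, so that any coincidence at an intermediate time would reappear as a coincidence of firing times.
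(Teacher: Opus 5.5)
Your proposal is correct and follows the paper's route: the paper justifies this corollary only by citation. Invariance comes from the exchange-symmetry argument in the proof of Corollary~\ref{cor:spec}, measure zero from the defining equalities $s_i = s_j$, and unreachability from Proposition~\ref{prop:order}, whose conclusion already states that no two jobs ever coincide; your continuous-time step via forward invariance and your treatment of the lapping case are sound elaborations of that same argument, the latter legitimate because the proof of Proposition~\ref{prop:order} never uses that the two firing events carry the same cycle index.
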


\begin{proposition}[Collision-free configurations: invariant, unreachable, of
computable measure]
\label{prop:frozen}
Let $N$ identical jobs run under a cap $C \ge N$, and let $\mathcal{C}$ be the
set of configurations whose $N$ cyclic gaps between write starts are all at
least $d$. On $\mathcal{C}$ no two write windows overlap, every job runs at unit
rate in both phases, every cycle lasts exactly $P_0 = 1 + d$, and every gap is
constant for all time. The set is nonempty if and only if $Nd \le 1 + d$; it is
closed, and its interior is nonempty exactly when the inequality is strict, with
Lebesgue measure
\begin{equation}
\mu(\mathcal{C}) = \Bigl(1 - \frac{Nd}{P_0}\Bigr)^{N-1}
\label{eq:frozenmeasure}
\end{equation}
relative to starts distributed uniformly over the cycle. Moreover no trajectory
enters the interior of $\mathcal{C}$ from outside it.
\end{proposition}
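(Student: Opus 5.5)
I will prove the four claims in order: invariance of $\mathcal{C}$, the nonemptiness condition, the measure formula, and then non-entry.

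\emph{Invariance.} Take a configuration in $\mathcal{C}$. Consecutive write starts are separated by at least $d$, and with $C \ge N$ a job writing alone drains at $f(1)=1$. Hence each write occupies exactly $[s_i, s_i+d]$. Since the next start is at least $d$ later, no two windows overlap and $n_w \le 1$ throughout. Every computing job advances at $\min\{1, C/n_c\} = 1$ because $n_c \le N \le C$. Each job therefore fires again at $s_i + d + 1 = s_i + P_0$. All starts translate by $P_0$, so every cyclic gap is constant and the configuration stays in $\mathcal{C}$.

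\emph{Nonemptiness and measure.} Measure gaps modulo $P_0$. The $N$ cyclic gaps sum to $P_0$ and each must be at least $d$, so $\mathcal{C} \ne \emptyset$ iff $Nd \le P_0 = 1+d$. Interior points exist iff there is slack, i.e.\ the inequality is strict. Closedness is immediate, since the defining inequalities are non-strict in continuous coordinates.

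For the measure, use the classical spacing result. For $N$ i.i.d.\ uniform points on a circle of length $P_0$, the probability that all cyclic spacings exceed $d$ is $(1 - Nd/P_0)^{N-1}$. To prove it, fix one point at the origin by rotation invariance. The remaining $N-1$ points then have spacings distributed uniformly on the simplex of total mass $P_0$. Subtracting $d$ from each of the $N$ spacings maps the constrained region affinely onto a scaled simplex of total $P_0 - Nd$. Its relative volume is $((P_0-Nd)/P_0)^{N-1}$, which is \eqref{eq:frozenmeasure}. I must state the convention explicitly: the firing phases are taken i.i.d.\ uniform modulo $P_0$, which is what ``starts distributed uniformly over the cycle'' means.

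\emph{Non-entry (main obstacle).} Suppose a trajectory starts outside the interior of $\mathcal{C}$ and is in $\operatorname{int}\mathcal{C}$ at some cycle $k+1$. I plan to run the dynamics backwards.

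In the interior all write windows are disjoint, so the state at cycle $k+1$ determines a unique predecessor among configurations whose windows are disjoint: shift every start back by $P_0$. That predecessor lies in the interior of $\mathcal{C}$.

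The hard part is showing that no configuration with overlapping windows can map into the interior. My route is a local reversibility argument. On $\operatorname{int}\mathcal{C}$ every job computes for exactly one unit and writes for exactly $d$ at unit rate. I want to show that the preceding write windows were themselves disjoint.

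Argue by contradiction on the last cycle containing an overlap. Write $e_i^k$ for the time job $i$ finishes write $k$, and suppose writes $k$ of two cyclically adjacent jobs $i$ (earlier) and $j$ overlapped. Two facts drive the argument:
\begin{enumerate}
\item Anonymity conserves their drained-volume lead. Pair-level occupancy accounting as in Lemma~\ref{lem:deficit} then shows the two end times differ by at most the start gap: $e_j^k - e_i^k \le s_j^k - s_i^k$.
\item Both jobs then compute at unit rate, since $C \ge N$, so the next start gap equals the end-time gap and is at most the previous gap, which is below $d$ because the windows overlapped.
\end{enumerate}
Consequently that gap remains below $d$ at cycle $k+1$, contradicting membership in $\mathcal{C}$.

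If the inequality in the first fact has the wrong direction in some case, I would fall back on two alternatives. One is the paper's ``unique reversibility'': the backward flow from a disjoint configuration is uniquely determined and stays disjoint. The other is to use Proposition~\ref{prop:order} to fix adjacency.
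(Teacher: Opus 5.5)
Your invariance, nonemptiness, closedness and measure steps follow the paper; your spacing computation just makes explicit its ``closed simplex $\{g_i \ge d,\ \sum_i g_i = P_0\}$''. The non-entry step, however, has a genuine gap: Fact~1 is false once $N \ge 3$. The accounting of Lemma~\ref{lem:deficit} closes only because, with two jobs, a job writing while the other is not is the \emph{only} writer. With third parties present, job $i$ banks a lead $\Lambda \le s_j - s_i$ on $[s_i, s_j)$. Job $j$ then cashes it out after $e_i$ at $f(n_w)/n_w$, which under fair sharing is $1/n_w < 1$ while others still write, so $e_j - e_i$ can exceed $s_j - s_i$. Theorem~\ref{th:nbody} is the counterexample: on the fully overlapping branch the leading pair exits $N-1$ times further apart than it entered. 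With $N = 3$, $a_1 = 0.6d$, $a_2 = 0.3d$, jobs $1$ and $2$ overlap at cycle $k$ and fire $1.2d$ apart at cycle $k+1$. So the overlapping pair you pick can be more than $d$ apart one cycle later, and no contradiction follows. The clause ``below $d$ because the windows overlapped'' also fails independently. A window stretched by earlier writers lasts longer than $d$: with $d = 0.3$ and firings at $0$, $0.01$, $0.33$, jobs $2$ and $3$ overlap although $s_3 - s_2 = 0.32 > d$.

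Non-entry is a property of the configuration as a whole, not a monotonicity of one gap, and your fallback is only named. As written, ``a unique predecessor among configurations whose windows are disjoint'' assumes what is to be shown.

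The paper argues by unique reversibility. At an interior point of $\mathcal{C}$ every rate is $1$ and events are isolated. Integrating the flow backwards from that section to the previous one gives a unique state, the rigid translate by $P_0$, so any $x$ with $F(x) \in \mathcal{C}$ is that state. What makes this non-circular is that the backward flow is read off the current state alone: a job with zero progress in its phase was in the other phase just before, which fixes the counts and hence the rates. No assumption on the predecessor's windows is needed.

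Alternatively, your pairwise idea survives if the pair is chosen correctly. Take $p, q$ to be the last two writes of a maximal chain of consecutively overlapping windows; if no chain ends, overlaps already recur indefinitely. Every earlier write ends before $e_p$, since the write step of Proposition~\ref{prop:order} applies to any two writes. The next write starts no earlier than $e_q$ by maximality. Hence $q$ finishes alone and $e_q - e_p = \Lambda$, the volume $p$ drained before $s_q$, which is strictly below $d$ because $p$ was still writing at $s_q$. With $C \ge N$ both then compute for exactly one unit, so their next writes start $\Lambda < d$ apart and overlap again. Overlaps therefore recur indefinitely, contradicting the invariance of $\mathcal{C}$ you proved first.
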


\begin{proof}
A job whose write starts at least $d$ after the previous one and at least $d$
before the next is the only writer for the whole of its write, so it drains at
$f(1) = 1$ and finishes in exactly $d$; and $n_c \le N \le C$ makes every
computing job advance at unit rate, so its compute lasts exactly $1$. Every job
therefore has period exactly $P_0$, the configuration advances rigidly, and the
hypothesis is reproduced at the next firing, so $F$ restricted to $\mathcal{C}$
is the identity and $F(\mathcal{C}) = \mathcal{C}$. For existence, the $N$ cyclic
gaps are nonnegative and sum to $P_0$, so all can be at least $d$ if and only if
$Nd \le 1+d$; the set they form is the closed simplex
$\{g_i \ge d, \sum_i g_i = P_0\}$, whose relative volume is
\eqref{eq:frozenmeasure} and whose interior is nonempty exactly when
$Nd < 1 + d$. Finally, at an interior state the write windows are pairwise
disjoint, so no two events coincide and every rate equals $1$; the evolution is
locally a translation at unit speed with isolated events, and integrating it
backwards from one section to the previous one gives a unique state, with the
same disjoint windows. Any $x$ with $F(x) \in \mathcal{C}$ is therefore that
state and is collision-free. On the boundary, where some gap equals $d$ exactly,
the backward continuation is the one the tie-breaking convention selects.
\end{proof}

\noindent Three consequences. A stagger in which every write has clearance $d$
is permanent, and exists exactly at and below the threshold \eqref{eq:sat}; the
margin available to a fleet spread evenly is the smaller quantity
$m = P_0/N - d$ of \S\ref{sec:practice}. A fleet that does not start
collision-free never reaches the interior of $\mathcal{C}$, in either time
direction, so the numerics below is entirely a statement about the contending
set, the boundary being left to the tie-breaking convention. And random launches
land in $\mathcal{C}$ with probability \eqref{eq:frozenmeasure}, which is
$9.2 \times 10^{-2}$ at $N = 8$, $L = 0.3$ and $9.2 \times 10^{-29}$ at
$N = 32$, $L = 0.9$: at fleet scale a collision-free schedule has to be
constructed, never encountered.

\section{Numerical study}
\label{sec:numerics}

\subsection{Protocol}

All results below come from an exact event-driven integrator
(\texttt{sim/exact.py}): between two events all rates are piecewise constant, so
stepping from event to event is exact up to floating-point round-off. This is
not a convenience, since a fixed-step integrator produces spurious drift of
order the step size, which is precisely the quantity under test. It verifies the
model as implemented, not the model against a facility; no quantity here is
fitted to a measured trace. Ties between simultaneous events are broken by
processing write completions before write starts, a convention that selects one
continuation on a set of measure zero, which is however exactly where the
synchronous and cluster manifolds and the boundary of $\mathcal{C}$ sit; the
results about those objects (Corollary~\ref{cor:spec},
Propositions~\ref{prop:order} and \ref{prop:frozen}) are proved from the rate
functions and hold whichever way ties are resolved. Tables~\ref{tab:p1} and
\ref{tab:hetero} are the only experiments that scan the power cap; every other
run is uncapped, the regime $C \ge N$ in which Theorem~\ref{th:nbody} holds. The
predictions of
\S\ref{sec:pair}--\S\ref{sec:nbody} were written down (\texttt{PROTOCOL.md} in
the replication package) after an exploratory fixed-step sweep and before the
confirmatory runs were inspected. We call this a written protocol and not a
pre-registration: the file carries no third-party timestamp, so it documents the
order in which we worked and does not certify it, and two of its predictions
were subsequently corrected by the analysis rather than by the data.

A random launch draws each job's initial progress through its compute phase
uniformly on $[0,1)$, so first write starts are uniform on $[0,T]$ and not on the
cycle $[0,P_0]$; the probability that such a launch is collision-free is
$(1-(N-1)d)^N$, which is $0.361$ at $N=4$, $L=0.3$ against $0.375$ for the
uniform-over-the-cycle law of \eqref{eq:frozenmeasure}, and it is the former that
governs the rejection rates reported here. Launches are conditioned on the
complement of $\mathcal{C}$. A launch inside it is rigid for ever
(Proposition~\ref{prop:frozen}) and returns an exactly zero separation rate, so
averaging the two populations together reports neither; the number of draws
rejected is given per cell in Table~\ref{tab:lyap}. Stochastic quantities are
reported as mean $\pm$ standard error over the retained launches, $60$ per cell
for the separation rate, $20$ for the configuration and moment tables and $500$
for the first-passage times of Table~\ref{tab:life}, with the spread across
seeds given where it matters.

Phases are defined as follows. Let $s_i^k$ be the instant of the $k$-th write
start \emph{of job $i$}, so that a job falling behind is not re-indexed; let $P$
be the median of $s_i^{k+1} - s_i^k$ over the retained window; and set
$\phi_i^k = 2\pi (s_i^k - \bar{s}^k)/P \bmod 2\pi$, with $\bar{s}^k$ the mean
over jobs. Unless stated otherwise the first half of each run is discarded.

\subsection{Pair neutrality and the \texorpdfstring{$N$}{N}-writer spectrum}

Table~\ref{tab:p1} reports the drift of the firing gap over $120$ cycles for two
identical jobs. Table~\ref{tab:spec} checks Theorem~\ref{th:nbody} and
Proposition~\ref{prop:anon} together: the Jacobian of the return map, measured
by central differences, reproduces the predicted spectrum, and the determinant
stays at $1$ while individual eigenvalues move by more than a factor four
between throughput rules. For $N=3$ the measured Jacobian in firing-time
coordinates is $\bigl(\begin{smallmatrix} 2 & 0 \\ 3/2 & 1/2
\end{smallmatrix}\bigr)$ to six digits, and the spectrum is independent of the
configuration within the fully overlapping region, as \eqref{eq:nmap} requires.
Figure~\ref{fig:spectrum} plots the work-conserving case across four fleet sizes.

\begin{figure}[t]
\centering
\includegraphics[width=0.52\textwidth]{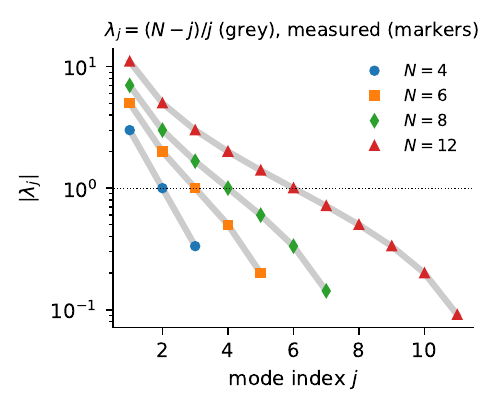}
\caption{Measured against predicted spectrum $\lambda_j = (N-j)/j$ (grey) of the
return map for $N$ fully overlapping writers under the work-conserving rule, at
one random configuration per fleet size; markers are the eigenvalue moduli of the
Jacobian by central differences. Their reciprocal pairing about the dotted unit
line is the unit determinant of Proposition~\ref{prop:anon}, and the
$\lceil N/2 \rceil - 1$ markers above it the expanding directions of
Theorem~\ref{th:nbody}.}
\label{fig:spectrum}
\end{figure}

\begin{table}[t]
\centering\small
\begin{tabular}{lll}
\toprule
Regime & Parameters & $\max |\delta_{120}-\delta_0|$ \\
\midrule
Storage only        & $d \in \{0.05,0.2,0.45\}$, $C=\infty$ & $1.1\times10^{-15}$ \\
Power cap binding   & $C \in \{1.5, 1.05\}$, $d = 0.05$     & $8.9\times10^{-15}$ \\
Mixed               & $C \in \{1.5,1.05\}$, $d \in \{0.2,0.45\}$ & $9.8\times10^{-15}$ \\
Severe cap, $C<1$   & $C \in \{0.9,0.7,0.5\}$, $d = 0.1$    & $1.2\times10^{-14}$ \\
Beyond the proof, $d>T$ & $d \in \{1.5, 2.5\}$, $C \in \{\infty, 1.5\}$ & $1.1\times10^{-15}$ \\
Unequal volumes     & $d_1=0.1$, $d_2=0.3$ & increment $0.2$ on $82/106$ cycles \\
\bottomrule
\end{tabular}
\caption{Pair neutrality, three initial gaps per cell spanning the cycle. Drifts
are at the level of floating-point noise, confirming Theorem~\ref{th:pair}
across all three coupling regimes. The fifth row is outside the hypothesis
$d < T$ (Remark~\ref{rem:dT}); the last confirms the translation term of
Proposition~\ref{prop:hetero} on the cycles whose writes are isolated, the other
$24$ carrying a delay shared equally but falling on cycles of different index.}
\label{tab:p1}
\end{table}

\begin{table}[t]
\centering\small
\begin{tabular}{lccc}
\toprule
throughput rule $f(n_w)$ & $N$ & $\det J$ & $\max_j |\lambda_j^{\text{meas}} - \lambda_j|$ \\
\midrule
$1$ (work-conserving)      & $3,4,5,8,12$ & $1.00000$ & $3.9\times10^{-9}$ \\
$n^{-0.4}$ (degrading)     & 4, 6 & $1.00000$ & $5.6\times10^{-9}$ \\
$1/(1+\tfrac12(n-1))$      & 4, 6 & $1.00000$ & $4.9\times10^{-9}$ \\
$1 + 0.3\ln n$ (improving) & 4, 6 & $1.00000$ & $4.1\times10^{-9}$ \\
$2$ if $n$ even, else $1$  & 4, 6 & $1.00000$ & $2.8\times10^{-9}$ \\
\bottomrule
\end{tabular}
\caption{Measured versus predicted spectrum $(N-j)f(j)/\bigl(j f(N-j)\bigr)$ of
the return map for $N$ fully overlapping writers. For $N = 6$ the leading
eigenvalue ranges from $3.37$ to $15.0$ across these five rules and the
determinant is $1$ in every case; the last rule is not monotone in $n_w$, so no
smoothness or ordering of $f$ is used. The map is affine here, so the residuals
are the round-off floor of the difference quotient at $\varepsilon = 10^{-7}$.}
\label{tab:spec}
\end{table}

\subsection{Separation of nearby configurations}
\label{sec:lyap}

Away from full overlap the dynamics is piecewise affine, and the combinatorics
of who overlaps whom changes as intervals expand and contract. We measure the
finite-time separation rate per cycle from the divergence of two trajectories
initially $10^{-9}$ apart in one job's phase, fitted over the window in which
the separation stays below $10^{-3}$ (Table~\ref{tab:lyap}). Separation means
$\max_i |s_i^k - \tilde{s}_i^k|$, the largest discrepancy between corresponding
write starts of the two trajectories, on absolute firing times: neither reduced
modulo the cycle nor quotiented by the rotation that \eqref{eq:dist} divides
out, both of which would bound a quantity we want to see grow. We call this a
separation rate and not a Lyapunov exponent: it is a finite-time,
finite-separation fit along one pair of trajectories per seed, in one
perturbation direction, and no tangent-space calculation with jump matrices at
the event surfaces is attempted here. Figure~\ref{fig:lyap} shows one such pair
of trajectories per cell.

\begin{figure}[t]
\centering
\includegraphics[width=0.52\textwidth]{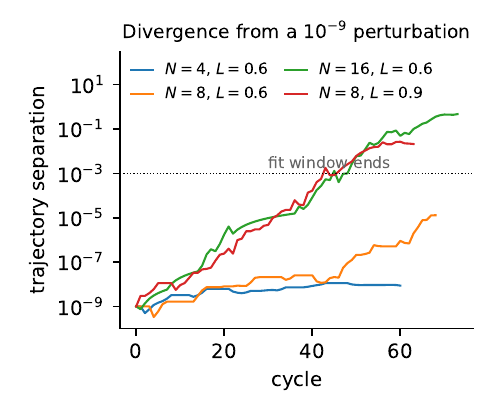}
\caption{Separation of two trajectories launched $10^{-9}$ apart in one job's
phase, one contending launch per cell, uncapped. Over the cycles plotted the
separation gains $8.7$ decades at $N=16$, $L=0.6$ and $7.3$ at $N=8$, $L=0.9$,
against $0.9$ at $N=4$, so the growth rises with fleet size and with load. The
rates fitted on these four traces, $0.035$ to $0.299$ per cycle in the order of
the legend, depart from their cell means (Table~\ref{tab:lyap}) by up to a factor
$2.3$: the spread within a cell is wide, so a single trace carries the trend in
$(N, L)$ and not the mean. Each curve ends at the last cycle every job has
completed within the horizon, and the fits use the window below the dotted line.}
\label{fig:lyap}
\end{figure}

Three controls. The perturbed job does not matter: perturbing each in turn, over
a fixed subsample of $20$ launches, gives per-job means in $[0.184, 0.197]$ at
$N=8, L=0.6$ and $[0.297, 0.318]$ at $N=16, L=0.9$, straddling the $0.188$ and
$0.305$ that subsample returns when job $1$ is perturbed. The spread across launches is wide, the $60$ retained launches
spanning $[0.041, 0.284]$ and $[0.196, 0.461]$ in those two cells, where it is
one-signed. At $L = 0.3$, and at $N = 4$ throughout, it is not: up to a sixth of
the contending launches return a rate that is zero to machine precision and
three return a small negative one (Table~\ref{tab:lyap}), so nearby trajectories
separating is a statement about most launches and not about all of them. Those
null launches have a mechanism and not an exception. Every one of them spends at
most $4\times10^{-5}$ of its time with three concurrent writers, and exactly none
in four of the five cells where such launches occur at all, so every collision
along such a trajectory is a two-body collision and Theorem~\ref{th:pair} leaves
nothing to amplify. That
condition is necessary and not sufficient, chains of two-body collisions not
decomposing into independent pairs: the launches that also stay two-body but do
separate return rates confined to $[0.037, 0.053]$ across every cell, a
population well below the cell means and distinct from it. The
fit window does matter, and is the dominant uncertainty. On the same subsample the
reported window gives $0.188$ against the $0.200$ of the full $60$, while
$[10^{-6}, 10^{-3}]$ and $[10^{-4}, 10^{-2}]$ give $0.289$ and $0.251$ at
$N=8, L=0.6$, a spread of $50\%$; those two are reached within $80$
cycles by $9$ and by $1$ of the $20$ launches, so they are biased
upward and the second is one trajectory. The value of $\lambda$ should be read
to within a factor of about
$1.5$, and only its sign and its trend in $(N, L)$ are used below. That trend in
$L$ carries a further caveat: the rejection rule removes the collision-free
draws, whose share falls as $L$ grows ($28$ rejections at $N=4$, $L=0.3$ against
none at $L=0.9$), so the launches compared along a row are not drawn from the
same conditional law and part of the increase is a change of population rather
than of dynamics.

\begin{table}[t]
\centering\small
\begin{tabular}{rccc}
\toprule
$N$ & $L=0.3$ & $L=0.6$ & $L=0.9$ \\
\midrule
4  & $+0.072 \pm 0.007$ $(28\,|\,10)$ & $+0.081 \pm 0.007$ $(8\,|\,5)$ & $+0.084 \pm 0.008$ $(0\,|\,5)$ \\
8  & $+0.110 \pm 0.008$ $(1\,|\,9)$ & $+0.200 \pm 0.006$ $(0\,|\,0)$ & $+0.227 \pm 0.008$ $(0\,|\,0)$ \\
16 & $+0.127 \pm 0.008$ $(0\,|\,5)$ & $+0.266 \pm 0.006$ $(0\,|\,0)$ & $+0.305 \pm 0.008$ $(0\,|\,0)$ \\
\bottomrule
\end{tabular}
\caption{Finite-time separation rate $\lambda$ per cycle, $80$ cycles, uncapped,
$60$ launches per cell conditioned on the complement of $\mathcal{C}$, mean $\pm$
standard error. In parentheses: the number of draws rejected as collision-free,
and after the bar the number of retained launches whose rate is null
($|\lambda| < 10^{-4}$), which \S\ref{sec:lyap} shows are the launches on which
no instant carries three writers. Three launches at $N = 4$ return a rate below
$-10^{-4}$, the most negative being $-0.007$. The rate grows with load and with
fleet size and does not saturate between $N = 8$ and $N = 16$, so we do not
extrapolate it. The control is Proposition~\ref{prop:frozen}: started from an
evenly staggered collision-free configuration, the same measurement returns a
separation that does not grow at all.}
\label{tab:lyap}
\end{table}

\subsection{The order sector, and no memory resolved beyond it}
\label{sec:order}

Separation of nearby trajectories says nothing about where a single trajectory
goes. Four measurements bound it, in the four cells of Table~\ref{tab:conf} and on
the timescales stated; we have not scanned the cap, the intermediate loads, or
launches built to be clustered.

\emph{The configuration keeps its order, and nothing resolvable beyond it.}
Table~\ref{tab:conf} reports the per-cycle displacement of a job's relative
phase, its standard deviation over the run, the distance $D$ to the launch
configuration and the configuration autocorrelation
$\rho(\ell) = \langle \cos(\phi_i^{k+\ell} - \phi_i^k) \rangle$, with
\begin{equation}
D(\phi, \psi) := \frac{1}{2\pi} \min_{\alpha \in [0, 2\pi)}
\Bigl[\frac1N \sum_{i=1}^{N} \bigl|\phi_i - \psi_i - \alpha\bigr|_{2\pi}^2
\Bigr]^{1/2} ,
\label{eq:dist}
\end{equation}
where $|\cdot|_{2\pi}$ is the representative in $(-\pi, \pi]$ and the rotation
$\alpha$ is divided out, that rotation being the time-translation symmetry
of an autonomous system and not a reorganisation. The minimum is taken exactly,
over the $N$ candidates that make the piecewise-quadratic objective stationary,
and not at the circular mean $\arg \sum_i e^{\mathrm{i}(\phi_i - \psi_i)}$, which
minimises the chordal sum instead and overstates $D$ by up to $0.04$ of a cycle
at these fleet sizes. Phases move by $1$--$6\%$ of a
cycle per cycle and fluctuate within $5$--$10\%$ of a cycle of their own mean.
The distance to the launch settles at $7$--$14\%$ of a cycle, against the
$22$--$26\%$ separating two independent configurations, which reads as a fleet
staying nearer its launch than chance would put it.

That reading is wrong, and the reason is Proposition~\ref{prop:order}. Two
independent configurations generally differ by a permutation of the jobs, and
the cyclic firing order is a constant of the motion, so no trajectory can ever
produce one from the other: the $22$--$26\%$ measures a rearrangement the
dynamics is forbidden to perform. The reference has to be drawn in the order
sector the launch fell in. Taking it as an independent launch of the same law,
relabelled into that sector, which is a benchmark for the complete loss of
detectable memory \emph{inside} the sector and not the farthest a reorganisation
could carry the fleet, gives $0.124 \pm 0.007$, $0.138 \pm 0.004$,
$0.099 \pm 0.004$ and $0.074 \pm 0.003$ in the four cells of
Table~\ref{tab:conf}, against $0.122$, $0.136$, $0.098$ and $0.074$ measured
(\texttt{sim/nullorder.py}). Equality of two means is not equality of two laws,
so that comparison is run three ways. The paired difference is $-0.0013$,
$-0.0018$, $-0.0003$ and $+0.0001$ of a cycle, and against a declared
equivalence margin of $0.01$, a tenth of the separation at issue, $20$ launches
establish equivalence in the two larger fleets and bound the difference by
$0.017$ at $N = 8$. The two pooled distributions of $D$ differ by a
$1$-Wasserstein distance of $0.0005$ to $0.0029$ of a cycle against an
interquartile spread of $0.034$ to $0.075$, with $p \ge 0.74$ under a
permutation that swaps the two references launch by launch. The autocorrelation
behaves the same way: the
$0.64$--$0.89$ plateau of Table~\ref{tab:conf} is $0.63$--$0.89$ for the
relabelled reference. What the fleet retains of its launch is its firing order,
which is a theorem; beyond it these observables resolve nothing, at the power
just stated and not as a proof of independence.

The dynamics is therefore neither frozen, which would give $\rho \equiv 1$ and zero
displacement, as the collision-free control does exactly, nor free to mix: the
autocorrelation is already at its sector value by a lag of $50$ cycles and is
still there at $150$. That also bounds what the separation rate of
\S\ref{sec:lyap} can mean, the sector being invariant, so exponential separation
saturates at the scale of that sector and not of the cycle.

\emph{Ranks never change, and no gap is seen to close.} In every cell of
Table~\ref{tab:conf}, over $20$ launches and $150$ cycles, the cyclic firing
order is the same at every cycle, as Proposition~\ref{prop:order} requires.
Invariance of the order is compatible with all the gaps contracting towards
zero without ever crossing, which is what asymptotic clustering would look like,
so we follow the gaps by rank, which that invariance is what makes possible:
rank $j$ separates the same pair of jobs at every cycle. The minimum over ranks
is stationary. Its mean over launches changes by $+21\%$, $+4\%$, $-24\%$ and
$+44\%$ over $800$ cycles, from $0.0137 \pm 0.0025$ to $0.0166 \pm 0.0022$ cycles
at $N=8$, $L=0.3$ and from $0.0009 \pm 0.0002$ to $0.0013 \pm 0.0003$ at $N=32$,
$L=0.9$, and paired launch by launch its ratio has a $95\%$ interval inside
$[0.37, 2.9]$ in all four cells: a contraction by more than a factor $2.7$ is
excluded over that span, a slower one is not. No particular pair tightens,
though. The gap of the rank smallest at launch has a paired ratio confined to
$[2.2, 68]$, above $1$ in every cell, and the rank realising the minimum has
moved by the end of the run in $16$ to $20$ of the $20$ launches: the small gaps
are a rotating population and not a nucleating cluster.

\emph{No coherence in the first four Daido moments.} Assign
$R_m = |\langle e^{\mathrm{i} m \phi} \rangle|$ \citep{daido1996} for
$m = 1, \dots, 4$: the first moment alone would not settle the question, since
two antipodal clusters, three equidistant ones and a splay state all give
$R_1 \simeq 0$ and are distinguished by the higher moments, with $R_m = 1$ at
$m$ equal to the number of clusters. The reference is not $0$ but the finite-$N$
value for independent uniform phases, which we resample at the same $N$ rather
than take from the Rayleigh limit $\sqrt{\pi}/(2\sqrt{N})$; the two differ by
$0.5\%$, which is not negligible against the departures under test. The largest
departure anywhere in the $36$ cells of Table~\ref{tab:order} is $+3.2$ standard
errors ($N=8$, $L=0.9$, $m=3$), and multiplicity alone does not dispose of it.
The $36$ statistics are not independent, four moments being read off the same
configurations and three loads sharing their launch draws at fixed $N$;
sign-flipping whole launches, which resamples the joint null with that
dependence in place (median absolute correlation $0.20$), puts the $95$th
percentile of the maximum at $2.80$ rather than the $3.72$ that $36$ independent
Student statistics on $19$ degrees of freedom give. What that departure does not
survive is the choice of null. Drawing phases under the launch law of the runs,
uniform on the compute phase and conditioned outside $\mathcal{C}$, raises the
floor at that cell from $0.315$ to $0.328$ and brings the departure down to
$+1.7$ standard errors; and the paired test, which differences $R_m$ over the run
against $R_m$ at the launch of the same run and so cancels the launch law
exactly, returns $-0.002 \pm 0.042$ there, its largest departure anywhere being
$+2.3$ standard errors ($N=16$, $L=0.3$, $m=3$) against a jointly resampled
threshold of $2.84$. Retaining the third and fourth quarters of the run
separately changes no conclusion. What these moments exclude is a low-order
cluster state or a splay; they are not precise enough to exclude a weak one, and
a cell $3.2$ standard errors above the uniform floor is the size of departure
they leave open.

\emph{The burst distribution is nonetheless not that of independent phases, and
its tail is the heavier one.} The
quantity an operator sees is the number of jobs writing at the same instant.
Conditioned on the effective write duty $q$ the fleet realises, independent
phases would give $\mathrm{Binomial}(N, q)$, and the measured distribution
departs from it in both directions. At $N=8$, $L=0.3$ ($q = 0.046$) the fleet
spends $7.0 \pm 0.8\%$ of its time with two or more writers against $5.0\%$ for
independent phases; at $N=32$, $L=0.9$ ($q = 0.100$) it spends
$63.9 \pm 1.3\%$ against $84.5\%$, so ordinary crowding is rarer there. The upper
tail goes the other way in every cell: the $99$th percentile of the number of
concurrent writers is $3$, $5$, $5$ and $12$ against $2$, $4$, $4$ and $8$ for
independent phases, and the $99.9$th is $4$, $6$, $6$ and $15$ against $3$, $6$,
$5$ and $9$. A fleet that does not lock can still spend a thousandth of its time
above the worst concurrency independent phases would ever produce, which is why
\S\ref{sec:practice} keeps the two claims apart. We compare time-weighted
quantiles of the two laws and not
sample maxima, which depend on how long each is observed. The departure is not
an artefact of the launch: the same statistic on the launch configuration itself,
before any dynamics, sits within $0.006$ of its binomial value in all four cells,
against departures of $+0.020$ to $-0.206$ after $300$ cycles. Part of what
remains is mechanical rather than dynamical, a write window lengthening precisely
when it overlaps others, which inflates the time at high $n_w$ at fixed duty.

These four say what the fleet does in these cells: it wanders over the sector
its launch fixed, without freezing and without leaving it, and it does not
cluster. That is why the moments sit at the incoherent floor. A random launch
produces an incoherent configuration and the dynamics does not transform it,
which is a statement about the map and not evidence that the invariant measure
is uniform; the writer distribution shows directly that ``no synchronisation''
is not the same as ``as if independent''. In particular the moments do not
license calling the long-run phase distribution independent, and we do not.

\begin{table}[t]
\centering\small
\begin{tabular}{rrcccccccc}
\toprule
$N$ & $L$ & step/cycle & spread & $\bar{D}$ & $\bar{D}_{\text{sector}}$ & $\rho(1)$ & $\rho(50)$ & $\rho(150)$ & rank ch. \\
\midrule
8  & 0.3 & $0.011$ & $0.085$ & $0.122$ & $0.124$ & $0.996$ & $0.689$ & $0.709$ & $0$ \\
8  & 0.9 & $0.064$ & $0.104$ & $0.136$ & $0.138$ & $0.887$ & $0.645$ & $0.645$ & $0$ \\
16 & 0.6 & $0.029$ & $0.073$ & $0.098$ & $0.099$ & $0.975$ & $0.800$ & $0.799$ & $0$ \\
32 & 0.9 & $0.045$ & $0.055$ & $0.074$ & $0.074$ & $0.942$ & $0.888$ & $0.886$ & $0$ \\
\midrule
\multicolumn{2}{l}{collision-free control} & $8\times10^{-16}$ & $0$ & $2\times10^{-13}$ & --- & $1$ & $1$ & $1$ & $0$ \\
\bottomrule
\end{tabular}
\caption{Motion of the configuration, uncapped, $800$ cycles, second half
retained, $20$ launches. ``step/cycle'' is the mean per-cycle change of a job's
relative phase and ``spread'' its standard deviation over the window; $\bar{D}$
is the mean distance \eqref{eq:dist} to the launch and
$\bar{D}_{\text{sector}}$ the same distance to an \emph{independent} launch
relabelled into the trajectory's order sector (\texttt{sim/nullorder.py}), both
in units of the cycle. Standard errors over launches are at most $0.007$ on the
distances, $0.005$ on the first two columns and $0.031$ on the correlations. Two
independent configurations whose relative order is left free are $0.220$,
$0.241$ and $0.256$ apart at $N = 8$, $16$ and $32$: the reference the
fleet appears to beat, and does not. The largest distance reached within a run
averages $0.206$, $0.257$, $0.194$ and $0.175$, so a trajectory does visit
configurations farther from its launch than a fresh draw in its own sector, one
reason that draw is a benchmark and not a bound. ``rank ch.'' counts cycles whose
cyclic firing order differs from the previous one, over all launches, on a
separate $300$-cycle run.}
\label{tab:conf}
\end{table}

\begin{table}[t]
\centering\small
\begin{tabular}{rrccccc}
\toprule
$N$ & $L$ & $R_1$ & $R_2$ & $R_3$ & $R_4$ & floor \\
\midrule
8  & 0.3 & $0.306 \pm 0.017$ & $0.313 \pm 0.014$ & $0.312 \pm 0.015$ & $0.317 \pm 0.016$ & $0.315$\\
8  & 0.6 & $0.314 \pm 0.014$ & $0.332 \pm 0.013$ & $0.324 \pm 0.009$ & $0.322 \pm 0.010$ & $0.315$\\
8  & 0.9 & $0.319 \pm 0.018$ & $0.332 \pm 0.014$ & $0.343 \pm 0.009$ & $0.331 \pm 0.007$ & $0.315$\\
\midrule
16 & 0.3 & $0.216 \pm 0.007$ & $0.244 \pm 0.012$ & $0.218 \pm 0.007$ & $0.208 \pm 0.006$ & $0.222$\\
16 & 0.6 & $0.225 \pm 0.006$ & $0.217 \pm 0.005$ & $0.221 \pm 0.005$ & $0.220 \pm 0.006$ & $0.222$\\
16 & 0.9 & $0.233 \pm 0.012$ & $0.236 \pm 0.011$ & $0.232 \pm 0.008$ & $0.228 \pm 0.006$ & $0.222$\\
\midrule
32 & 0.3 & $0.154 \pm 0.011$ & $0.162 \pm 0.009$ & $0.171 \pm 0.007$ & $0.157 \pm 0.007$ & $0.157$\\
32 & 0.6 & $0.170 \pm 0.008$ & $0.166 \pm 0.006$ & $0.163 \pm 0.006$ & $0.163 \pm 0.005$ & $0.157$\\
32 & 0.9 & $0.176 \pm 0.010$ & $0.168 \pm 0.007$ & $0.163 \pm 0.004$ & $0.160 \pm 0.004$ & $0.157$\\
\bottomrule
\end{tabular}
\caption{Daido moments below the stagger-feasibility threshold, uncapped, $300$
cycles, second half retained; mean $\pm$ standard error over $20$ launches, no
draw having been rejected at these fleet sizes. The floor is $\mathbb{E}[R_m]$
for $N$ independent uniform phases, resampled over $2\times10^5$ draws (the
Rayleigh approximation $\sqrt{\pi}/(2\sqrt{N})$ gives $0.313$, $0.222$, $0.157$).
Under the launch law of the runs it rises to $0.324$ and $0.328$ in the two
$N = 8$ cells discussed in the text, which brings the largest departure in the
table down to $+1.7$ standard errors. A $k$-cluster state
would show $R_k \to 1$, a splay state every $R_m$ well below the
floor; neither appears at any $(N, L, m)$. Moments up to $m = 4$ do not resolve
five or more clusters, unequal or unequally spaced ones, or associations between
particular jobs, so this is a test against low-order coherence and not against
all of it.}
\label{tab:order}
\end{table}

\paragraph{What the load parameter does not show.}
\label{sec:trap}
Sweeping the duty $d$ upward without bounding the demand makes $R_1$ rise
steeply, which invites reading a synchronisation transition. It is queueing.
Past $L^\star$, which is $1.03$ at $N = 32$, the fabric cannot serve the fleet
within its free period and the effective cycle stretches to $\approx Nd$: the
integrator gives $P_{\rm eff} = 1.22,\ 3.46,\ 6.78,\ 11.31$ at
$L = 1.05,\ 3.2,\ 6.4,\ 11.2$, while $R_1$ climbs from $0.28$ to $0.61$, $0.67$
and back to $0.48$. Every job now spends most of its time writing, so a high
$R_1$ records queue occupancy, and its non-monotonicity in $L$ confirms that it
is not measuring coherence. That leaves the synchronisation question meaningful
above $L^\star$ and makes $R_1$ the wrong observable there: any empirical claim
of fleet synchronisation must control for $L$.

\section{What this says operationally}
\label{sec:practice}

\paragraph{In the uncapped homogeneous regime tested here, mutual contention does
not phase-lock a fleet, which is not the same as leaving it quiet.} Symmetric
contention between identical jobs has exactly zero pairwise coupling
(Theorem~\ref{th:pair}), cannot permute the firing order
(Proposition~\ref{prop:order}), and has a three-body term that is volume
preserving on the branch we can compute, with synchrony an unstable fixed point
of it (Corollary~\ref{cor:spec}). Exact synchrony is therefore unreachable in
finite time, but that leaves open an asymptotic approach to it, gaps shrinking
without ever crossing, which happens on the cluster manifolds
(Corollary~\ref{cor:recruit}) that no such fleet can enter, and which we exclude
elsewhere by measurement and not by proof: no cell of \S\ref{sec:order} resolves
a decay of the smallest gap over $800$ cycles, the Daido moments stay within a
change of null of their floor, and the firing order never changes. What those
measurements do not support, once the invariance of that order is taken into
account, is any memory of the launch beyond the order itself. Two restrictions
travel with all of this. It concerns phase locking and not concurrency: the same
runs put the upper tail of the number of concurrent writers \emph{above} its
independent-phase value in every cell (\S\ref{sec:order}), so a fleet with no
aligning mechanism can still burst harder than uncorrelated jobs would, and how
to size for that is not a question settled here. And it concerns this model on
the timescales simulated with the cap not binding: correlated bursts observed in
production would then have to come from mechanisms it excludes. Those are
external common causes, structural rather than dynamical
(jobs launched together, checkpoint intervals set to the same round number of
steps, wall-clock-aligned policies, correlated restarts after a shared failure);
endogenous mechanisms excluded by assumption (asynchronous checkpointing,
unequal allocation, heterogeneity behind a binding cap, delayed power control
\citep{lerouxtardif2026}); and the within-job alignment of thousands of ranks.
The falsifiable content is a redirection: storm incidence should track
launch-time and interval-setting statistics rather than the fleet's history of
past collisions.

\paragraph{A stagger is permanent while the cap does not bind, and jitter is what
ends it.} An offset schedule in which every pair of consecutive write starts is
at least $d$ apart never degrades, and exists whenever $L \le N/(N-1)$. That is
Proposition~\ref{prop:frozen}, and it assumes $C \ge N$: what a binding cap does
to a staggered fleet is open (\S\ref{sec:limits}), and it is the assumption a
production fleet is least likely to satisfy. Under Assumption~\ref{as:determ}
nothing erodes the stagger. Relaxing that assumption is what gives the
intervention a finite lifetime, and the mechanism is diffusive rather than
dynamical: with per-cycle jitter $T_i^k = T(1 + \sigma z_i^k)$, the $z_i^k$
standard normal and independent across jobs and cycles, the work floored at
zero, a fleet inside $\mathcal{C}$ has every job at its own free period, so each
gap performs a \emph{driftless} random walk of per-cycle variance $2\sigma^2$.
The $N$ walks are not independent: adjacent gaps share a jitter draw, so their
increments have covariance $-\sigma^2$, and the sum of the gaps is pinned to the
cycle. The lifetime of a stagger of margin $m$ is then a first-passage time of
that correlated family, which still scales as $(m/\sigma)^2$ in the margin. Table~\ref{tab:life} measures it over
$500$ launches per cell: the median number of cycles to the first overlap is
$0.11$ to $0.24$ times $(m/\sigma)^2$, over two decades in $(m/\sigma)^2$ and a
factor four in $N$: thirteen of the sixteen cells of Figure~\ref{fig:life} fall
in a band of unit slope on log axes although $m$ and $\sigma$ vary separately by
factors of $4$ and $8$, which is the content of the scaling, and the three above
it sit on the one-cycle floor of the measurement. That coefficient is not a
constant of the model: it falls with $N$ and depends on the jitter law through
more than its variance, so it summarises the cells tested and is not a formula.

What the quadratic scaling tests is narrower than it appears. Before the first
overlap the fleet is inside $\mathcal{C}$, where no two writes contend, so
Theorem~\ref{th:pair} is not what keeps the walk driftless there: each gap is a
difference of two independent jitters and would diffuse the same way whatever
the dynamics does after a collision. What the exponent does exclude is a drift
in the gaps under jitter alone, which Proposition~\ref{prop:frozen} does not
cover, being a statement about the deterministic flow. A systematic drift $\mu$
would give a passage time of order $m/|\mu|$ rather than $(m/\sigma)^2$, and
none is seen over two decades.

This replaces the natural but incorrect recipe of re-staggering on the e-folding
time $\ln(1/\varepsilon)/\lambda$ of the separation rate, which is the time for
two \emph{nearby trajectories} to diverge and not, as \S\ref{sec:order} shows,
the time for one trajectory to lose its structure. The operator's quantity is
the first passage of the margin to zero, set by the jitter budget and not by
$\lambda$. A median is not a guarantee, so the useful form is a survival
probability. At $N=8$, $L=0.3$, where an even stagger affords $m = 0.092$, a
jitter of $1\%$ leaves the schedule intact for $10$ cycles with probability
$0.78$ and for $50$ with probability $0.01$; halving the jitter raises those to
$1.00$ and $0.60$. These are proportions over $500$ launches, so their binomial
standard error is at most $0.022$. Buying a hundred cycles at $1\%$ jitter would take a margin
of $0.24$ of a cycle, which no fleet of more than four jobs can offer, since
$m = P_0/N - d$ falls as $1/N$ and is already $0.045$ at $N=16$ and $0.022$ at
$N=32$. At scale it is the refresh interval and not the margin that is the
adjustable quantity.

\paragraph{Fleet size cuts both ways.} The leading expansion rate on the fully
overlapping branch is $N-1$, and the measured separation rate rises with $N$
throughout Table~\ref{tab:lyap}. Working against that, the collision-free set
survives to arbitrary $N$ but on ever tighter terms: its threshold falls to $1$
and the margin per job as $1/N$, so its lifetime under jitter falls with the
$m^2 \sim 1/N^2$ of the margin, and the measured coefficient decreases with $N$
on top of that. Three fleet sizes, three of whose cells sit on the resolution
floor, do not establish an exponent, so we report the direction and not a
scaling.

\begin{table}[t]
\centering\small
\begin{tabular}{rrcrrrr}
\toprule
$N$ & $L$ & margin $m$ & $\sigma = 0.020$ & $0.010$ & $0.005$ & $0.0025$ \\
\midrule
8  & 0.3 & $0.0922$ & $5\ (0.24)$   & $16\ (0.19)$  & $56\ (0.17)$ & $226\ (0.17)$ \\
8  & 0.6 & $0.0594$ & $2\ (0.23)$   & $7\ (0.20)$   & $25\ (0.18)$ & $92\ (0.16)$ \\
16 & 0.3 & $0.0449$ & $1\ (\ast)$   & $4\ (0.20)$   & $11\ (0.14)$ & $42\ (0.13)$ \\
32 & 0.3 & $0.0222$ & $1\ (\ast)$   & $1\ (\ast)$   & $3\ (0.15)$  & $9\ (0.11)$ \\
\bottomrule
\end{tabular}
\caption{Lifetime of an evenly staggered collision-free schedule under per-cycle
jitter of the compute time: median cycles to the first overlap over $500$
launches, and in parentheses the ratio to $(m/\sigma)^2$. Uncapped, margin
$m = P_0/N - d$, jitter standard normal and independent across jobs and cycles.
Cells marked $\ast$ sit at the one-cycle floor of the measurement, where the
median is not resolved and the ratio is meaningless. Elsewhere it lies in
$[0.11, 0.24]$ and decreases with $N$, the expected effect of taking a minimum
over $N$ walks, correlated through the jitter draws they share. The distribution is wide, as a first-passage time is:
at $N=8$, $L=0.3$, $\sigma=0.005$ the $10$th and $90$th percentiles are $30$ and
$110$ cycles. No run was censored at the simulation budget.}
\label{tab:life}
\end{table}

\begin{figure}[t]
\centering
\includegraphics[width=0.52\textwidth]{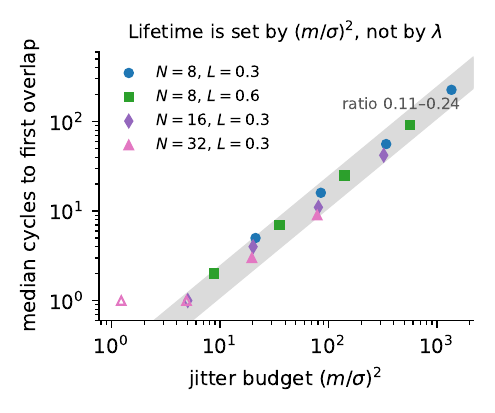}
\caption{The same $500$ launches per cell as Table~\ref{tab:life}, plotted
against the jitter budget $(m/\sigma)^2$; the band is the reported range
$[0.11, 0.24]$ of the ratio. Hollow markers are the three cells whose median sits
at the one-cycle floor, above the band because a median cannot fall below one
cycle and not because the scaling fails there. The residual ordering within the
band is the decrease of the coefficient with $N$.}
\label{fig:life}
\end{figure}

\section{Limitations}
\label{sec:limits}

\subsection{A boundary made explicit: heterogeneity behind a binding cap}
\label{sec:hetero}

Theorem~\ref{th:pair} assumes identical jobs and Proposition~\ref{prop:hetero}
assumes a cap that does not bind. Neither hypothesis can be dropped. Take
$T_1 = T_2 = 1$, $d_1 = 0.2$, $d_2 = 0.4$, a detuning $\Delta = 0.2$, and a cap
$C = 1.2$. Both resources remain anonymous, neither rule being able to name a
job, yet the gap map is not a translation: the integrator gives
$\delta = 0.05 \mapsto 0.283$, $0.10 \mapsto 0.367$, $0.15 \mapsto 0.450$,
against the pure translation $\delta + \Delta$ that
Proposition~\ref{prop:hetero} gives at $C \ge 2$. The increment depends on
$\delta$, which is a coupling in the sense of Definition~\ref{def:anon}.

Two features make it interpretable. The coupling is \emph{supported on gaps
below the detuning}: for $\delta \ge \Delta$ the increment returns to a
constant, $0.333$ in this cell, since the faster writer has then finished and
resumed computing before the slower one is delayed by it. The phase response is
thus carried by an interval of width $\Delta$ in a cycle of length
$\approx 1 + \bar d$, and it disappears as the fleet becomes homogeneous,
consistently with Theorem~\ref{th:pair}. And its strength is set by the gap
between the two compute rates: writing $\kappa := \dd g/\dd\delta$ for the slope
of the increment on that support, Table~\ref{tab:hetero} finds
\begin{equation}
\kappa = \frac{1}{v_2} - \frac{1}{v_1}
\label{eq:kappa}
\end{equation}
to five digits at every cap tested, vanishing exactly at $C = 2$ where
$v_1 = v_2$. We report \eqref{eq:kappa} as a measured law for this
two-parameter family and not as a theorem: the mechanism is clear, the two jobs
converting a work lead into a time lag through different rate sequences once
their write durations differ, but the coefficient is not derived in general, and
neither a fixed point of the resulting map nor its stability is established.
What this bounds is the reach of Theorem~\ref{th:pair}: order-two
phase-blindness is a statement about \emph{identical} units, and a real fleet is
never exactly homogeneous.

\begin{table}[t]
\centering\small
\begin{tabular}{lccrr}
\toprule
cap $C$ & $v_1 = \min\{1,C\}$ & $v_2 = \min\{1,C/2\}$ & $\kappa$ measured & $1/v_2 - 1/v_1$ \\
\midrule
$\infty$, $2.0$ & $1$ & $1$   & $0.0000$ & $0.0000$ \\
$1.99$   & $1$    & $0.995$ & $0.0050$ & $0.0050$ \\
$1.8$    & $1$    & $0.9$   & $0.1111$ & $0.1111$ \\
$1.5$    & $1$    & $0.75$  & $0.3333$ & $0.3333$ \\
$1.2$    & $1$    & $0.6$   & $0.6667$ & $0.6667$ \\
$1.0$    & $1$    & $0.5$   & $1.0000$ & $1.0000$ \\
$0.7$    & $0.7$  & $0.35$  & $1.4286$ & $1.4286$ \\
\bottomrule
\end{tabular}
\caption{Pairwise coupling for two heterogeneous jobs ($d_1 = 0.2$,
$d_2 = 0.4$) behind a binding cap, measured at $\delta = 0.1$ by central
differences on the exact integrator (\texttt{sim/hetero.py}). The coupling is
exactly zero for $C \ge 2$, where the cap never binds with two jobs, and grows as
the two compute rates separate. Anonymity holds throughout: it is not sufficient
for phase-blindness once the units differ.}
\label{tab:hetero}
\end{table}

\subsection{The remaining assumptions, and what is not covered}

\emph{Blocking checkpoints (Assumption~\ref{as:blocking}) is the fragile one.}
Under asynchronous checkpointing the job keeps computing while its state is
flushed, and the cycle length becomes $\max\{T, \text{write duration}\}$. Two
things break at once: the phases stop partitioning time, so \eqref{eq:partition}
fails and with it the step that turns equal writing time into equal computing
time; and the maximum is a threshold nonlinearity, so a job whose write is
stretched past $T$ is delayed while one whose write finishes early is not. We
expect a pairwise coupling to reappear where contention pushes the write past
$T$, but we have not computed it and the expectation is not a result: a first
attempt produced drifts that are integer multiples of the period, that is skipped
checkpoints rather than phase slip, and settling whether a checkpoint that misses
its slot is dropped or queued changes the model before it changes the answer.
This is the first extension to compute, and the regime of modern stacks.

\emph{A binding power cap is outside the $N$-body result.}
Theorem~\ref{th:pair} and Proposition~\ref{prop:order} hold for identical jobs
at every $C > 0$, but Theorem~\ref{th:nbody} needs $C \ge N$: when the cap
binds, jobs that finish writing at different instants compute at rates that
depend on how many others have already finished, and the compute durations cease
to be equal. Measuring the Jacobian numerically at $N = 4$, $d = 0.45$ gives
$\det J = 1.50$ at $C = 2$ and $\det J = 24$ at $C = 1$, against $1$ for
$C \ge N$: on that branch the map is strongly expanding rather than volume
preserving, and the unit determinant of \S\ref{sec:nbody} is a property of
the storage channel alone. Every run in \S\ref{sec:numerics} is uncapped for
this reason, and what a homogeneous fleet does under a binding cap is open. It
is the most consequential gap here, since production fleets are capped.

\emph{Unequal allocation restores coupling; degrading throughput does not.}
Remark~\ref{rem:robust} and Proposition~\ref{prop:anon} cover any anonymous
rule, monotone or not. Priorities, weighted classes and per-job bandwidth caps
break anonymity and are expected to produce a coupling of first order in the
weight difference; we have not derived it.

\emph{Determinism matters because the map is marginally stable in the pairwise
sector}, which is what \S\ref{sec:practice} quantifies: jitter turns the neutral
direction into a random walk, phase gaps diffuse rather than lock, and
collisions recur on the diffusive timescale. For small fleets this competes with,
and may dominate, the separation measured in \S\ref{sec:lyap}.

Four further items are outside scope. \emph{The intra-job burst}: the model
collapses each job to one scalar writer, so a fleet that never synchronises
across jobs can still produce correlated I/O for that reason alone, and
separating the two experimentally means conditioning on the number of
concurrently checkpointing \emph{jobs}, not on aggregate bandwidth.
\emph{Network collectives}, which couple ranks within a job and could set the
effective $T_i$, are absent. \emph{Electrical response}: the facility is not
modelled, so the megawatt-scale figures of \S\ref{sec:intro} are context and not
calibration. \emph{Scale and ergodicity}: the fleets reach $N = 32$, well below
production scale, and no mixing is established in the ergodic-theoretic sense,
which would need a full Lyapunov spectrum, a decay-of-correlations estimate and
a global invariant measure.

\section{Relation to the companion paper, and an open question}
\label{sec:open}

The companion paper \citep{lerouxtardif2026} asks whether co-located training
jobs phase-lock behind a shared power envelope, and answers that they can:
load-dependent throttling is a coupling channel whose sign is set by the lag of
the control loop. Its throttle is a controller with memory, acting on aggregate
demand after a lag $\tau$, so the rate a job receives at $t$ depends on the state
of the fleet at $t - \tau$. The cap here is $v = \min\{1, C/n_c(t)\}$, an
instantaneous function of current occupancy and so the $\tau \to 0$ limit of that
family; in that limit the companion's coupling coefficient, proportional to
$-\sin(m \bar\omega \tau)$ in the $m$-th harmonic, vanishes, and the two
statements agree where their domains meet. What is added here is that the
vanishing is exact rather than leading-order, and that its cause is not the
smallness of a parameter but anonymity together with exclusive phases, through
Lemma~\ref{lem:deficit}: the coupling in \citep{lerouxtardif2026} is a property
of the \emph{controller}, not of the sharing.

Everything above concerns one storage fabric, one power cap and two phases per
cycle, and the mechanism used little of that: Theorem~\ref{th:pair} used that
the resource cannot tell two colliding users apart, that its rate depends on the
present occupancy alone, and that a unit is active on exactly one resource at a
time. Two hypotheses are not decoration and travel with the statement: the
phases must partition time, and the section of Definition~\ref{def:anon} must
recur with a constant offset, which is what $d < T$ buys here.

\begin{conjecture}[Anonymity forbids pairwise coupling]
\label{conj:blind}
Let a population of \emph{identical} integrate-and-fire units interact only
through anonymous resources that are \emph{memoryless}, that is whose rates
depend on the present occupancy alone, in any number and with arbitrary rate
functions, through any number of phases per cycle, with each unit active on
exactly one resource at any instant, and suppose the section of
Definition~\ref{def:anon} recurs along every trajectory with a constant offset.
Then the coupling is phase-blind at order two.
\end{conjecture}

\noindent The conjecture is deliberately confined to order two. Whether the
leading nonvanishing interaction is then of order three, and whether the
three-body map preserves volume on the fully overlapping branch, are separate
questions that do not follow from it and that we do not conjecture.

Two of the three structural hypotheses are known to be necessary rather than
convenient, the recurrence of the section being a regularity condition.
\emph{Identical units}: \S\ref{sec:hetero} exhibits an anonymous, memoryless,
exclusive-phase system with a nonzero pairwise coupling as soon as two units
differ and the cap binds. \emph{Memoryless resources}:
\citep{lerouxtardif2026} exhibits an anonymous resource with a lag whose coupling
is nonzero and whose sign is tunable. \emph{Exclusive phases} we believe
necessary, on the strength of the broken proof step of \S\ref{sec:limits}, but we
have exhibited no counterexample, and a broken proof is not a false
conclusion.

The obstruction is identifiable, which is the reason for stating the conjecture.
The occupancy identity \eqref{eq:occupancy} closes because
Assumption~\ref{as:blocking} makes the two phases partition time: equal total
writing time is read off the write phase and transported to the compute phase,
where it becomes equal solo work and hence an invariant work difference. With
$M$ resources and $p$ mutually exclusive phases the same bookkeeping gives $p$
occupancy identities and one partition identity, and nothing forces the solo
intervals on resource $r$ to match those on resource $r'$ unit by unit: the
single scalar $\Delta x$ must be replaced by an object that survives the
combinatorics of which unit is where. The conjecture also
has a reading testable long before it is settled: \emph{if it holds, a memoryless
anonymous resource cannot steer relative phase}. Such a resource still fixes
everything else, the
absolute rate, the mean period, the duty cycle and the queue, so uncontrollable
is the wrong word for it; what it cannot do is act on the gap between two
identical users as a function of that gap. An operator who wants to steer
relative phases through the resource a fleet shares must therefore break
anonymity, add memory, or exploit the heterogeneity that is already there.

\section{Conclusion}

Treating checkpointing jobs as pulse-coupled oscillators makes a
counter-intuitive prediction that we prove and verify within the model: between
identical jobs, symmetric contention has no pairwise coupling at all, it cannot
permute the firing order, and its three-body coupling preserves volume on the
branch where it can be computed, which makes synchrony a fixed point with
$\lceil N/2 \rceil - 1$ expanding directions rather than an attractor. A cluster
does recruit, but only inside the manifold on which it already exists
(Corollary~\ref{cor:recruit}), which a fleet firing at distinct instants cannot
enter. Sharing one storage fabric therefore gives such a fleet no mechanism,
within this model, that would drive it into phase locking of its own accord,
which is the sense in which its checkpoint storms are not self-reinforcing; its
bursts are not thereby milder than uncorrelated ones, the upper tail of
concurrency being the heavier in every cell measured. What it does instead
depends on how it was started and on very little else: spread widely enough that
no two writes overlap it is rigid, a regime available exactly below the
stagger-feasibility threshold, reachable from nowhere else and conditional on the
cap not binding; started outside it, in the cells we measured, it neither locks
nor clusters over hundreds of cycles, and the only memory of its launch these
observables resolve is the firing order that anonymity freezes.

Whether any of this depends on checkpointing is the open question of
\S\ref{sec:open} rather than a result: the proof uses two exclusive phases, a
deterministic amount of work per cycle and $d < T$, and
Conjecture~\ref{conj:blind} is precisely the claim that the first of those is
not essential. The boundaries are as informative as the statement, and two of the
three hypotheses can be removed with the coupling returning: with memory it
returns with a tunable sign \citep{lerouxtardif2026}, and with heterogeneity
behind a binding cap at $O(1)$ on a support set by the detuning. What follows for
practice is that a stagger is worth constructing, that it is permanent in the
deterministic uncapped model, and that the quantity governing its refresh is the
jitter budget through $(m/\sigma)^2$ rather than any exponent of the free
dynamics.

\paragraph{Reproducibility.} The replication package, submitted as ancillary
files with this preprint, contains the integrator including its tie-breaking
convention, its launch law, its rejection rule and the distance
\eqref{eq:dist} with the grid check of its rotation minimisation
(\texttt{sim/exact.py}), the written protocol (\texttt{PROTOCOL.md}), and one
script per table: \texttt{test\_p1} (Table~\ref{tab:p1}), \texttt{spectrum}
(\ref{tab:spec}), \texttt{dynamics} (\ref{tab:lyap}), \texttt{checks}
(\ref{tab:conf}), \texttt{nullorder} (its order-sector reference and the
equivalence tests), \texttt{order} (\ref{tab:order}), \texttt{burst} (the writer
distribution of \S\ref{sec:order} and the lifetimes of Table~\ref{tab:life}),
\texttt{hetero} (\ref{tab:hetero}), with \texttt{frozen} covering
\S\ref{sec:invariant} and \S\ref{sec:trap}. All runs are seeded, require only
\texttt{numpy}, and \texttt{python sim/all.py} regenerates every number quoted
here, under Python 3.14.0 and numpy 2.4.6. The four figures are produced
separately by \texttt{sim/figures.py}, which additionally requires
\texttt{matplotlib}.

{\footnotesize
\setlength{\bibsep}{3pt plus 1pt}
\bibliographystyle{plainnat}
\bibliography{refs}}

\end{document}